\documentclass[10pt, twocolumn, a4paper]{article}

\usepackage[top=2cm, bottom=2.5cm, left=1.8cm, right=1.8cm, columnsep=0.6cm]{geometry}

\usepackage{microtype}      % Micro-typography
\usepackage{graphicx}       % Images
\usepackage{subcaption}     % Sub-figures
\usepackage{booktabs}       % Professional tables
\usepackage{times}          % Times Roman font (looks more like a finished paper)

\usepackage[usenames,dvipsnames]{xcolor}
\definecolor{DarkBlue}{rgb}{0.0, 0.0, 0.55}

\usepackage[colorlinks=true,
            linkcolor=DarkBlue,
            citecolor=DarkBlue,
            urlcolor=DarkBlue,
            bookmarksnumbered=true]{hyperref}

\usepackage{authblk}

\usepackage[numbers,sort&compress]{natbib}
\usepackage{amsmath,amsfonts,bm,}

\newcommand{\inner}[2]{\left\langle #1, #2 \right\rangle}
\newcommand{\vttheta}{{\bm{\theta}}}
\newcommand{\vta}{{\bm{a}}}

\newcommand{\vtb}{{\bm{b}}}

\newcommand{\vtx}{{\bm{x}}}

\newcommand{\vtsj}{{\bm{s}^j}}

\newcommand{\vtsl}{{\bm{s}^\ell}}
\newcommand{\vts}{{\bm{s}}}
\usepackage{physics}

\def\eqref#1{equation~\ref{#1}}
\def\1{\bm{1}}

\def\vtheta{{\bm{\theta}}}

\DeclareMathAlphabet{\mathsfit}{\encodingdefault}{\sfdefault}{m}{sl}
\SetMathAlphabet{\mathsfit}{bold}{\encodingdefault}{\sfdefault}{bx}{n}

\newcommand{\E}{\mathbb{E}}

\usepackage{amsmath}
\usepackage{amssymb}
\usepackage{mathtools}
\usepackage{amsthm}
\usepackage{xspace}

\usepackage[capitalize,noabbrev]{cleveref}

\theoremstyle{plain}
\newtheorem{theorem}{Theorem}[section]
\newtheorem{proposition}[theorem]{Proposition}
\newtheorem{lemma}[theorem]{Lemma}
\newtheorem{example}[theorem]{Example}

\theoremstyle{definition}
\newtheorem{definition}[theorem]{Definition}
\newtheorem{assumption}[theorem]{Assumption}
\theoremstyle{remark}
\newtheorem{remark}[theorem]{Remark}

\newcommand{\HS}{\ensuremath{\mathrm{HS}}\xspace}
\newcommand{\MMD}{\ensuremath{\mathrm{MMD}}\xspace}

\title{Characterizing Trainability of Instantaneous Quantum Polynomial Circuit Born Machines}

\author[1,2,3]{Kevin Shen}
\author[3]{Susanne Pielawa}
\author[1,2]{Vedran Dunjko}
\author[1,2]{Hao Wang}

\affil[1]{$\langle aQa^L \rangle$ Applied Quantum Algorithms, Leiden University, The Netherlands}
\affil[2]{LIACS, Leiden University, Niels Bohrweg 1, 2333 CA, Leiden, The Netherlands}
\affil[3]{BMW Group, 80788 München, Germany}

\date{} % Leave empty to omit date, or put \today

\begin{document}

% Special wrapper to force the Abstract to span two columns in the article class
\twocolumn[
  \begin{@twocolumnfalse}
    \maketitle
    \begin{abstract}
      Instantaneous quantum polynomial quantum circuit Born machines (IQP-QCBMs) have been proposed as quantum generative models with a classically tractable training objective based on the maximum mean discrepancy (MMD) and a potential quantum advantage motivated by sampling-complexity arguments, making them an exciting model worth deeper investigation. 
      While recent works have further proven the universality of a (slightly generalized) model, the next immediate question pertains to its trainability, i.e., whether it suffers from the exponentially vanishing loss gradients, known as the barren plateau issue, preventing effective use, and how regimes of trainability overlap with regimes of possible quantum advantage. 
      Here, we provide significant strides in these directions.
      To study the trainability at initialization, we analytically derive closed-form expressions for the variances of the partial derivatives of the MMD loss function and provide general upper and lower bounds. 
      With uniform initialization, we show that barren plateaus depend on the generator set and the spectrum of the chosen kernel. We identify regimes in which low-weight-biased kernels avoid exponential gradient suppression in structured topologies. Also, we prove that a small-variance Gaussian initialization ensures polynomial scaling for the gradient under mild conditions.
      As for the potential quantum advantage, we further argue, based on previous complexity-theoretic arguments, that sparse IQP families can output a probability distribution family that is classically intractable, and that this distribution remains trainable at initialization at least at lower-weight frequencies. 
      \vspace{0.5cm} % Space after abstract
    \end{abstract}
  \end{@twocolumnfalse}
]

% Footnote for corresponding author (Optional, manual placement)
\let\thefootnote\relax\footnotetext{Correspondence to: Kevin Shen $<$kevin.shen@bmwgroup.com$>$, Hao Wang $<$h.wang@liacs.leidenuniv.nl$>$.}

\section{Introduction}
Variational quantum algorithms (VQAs) have emerged as a leading paradigm for harnessing near-term quantum devices to tackle problems ranging from chemistry to combinatorial optimization. These hybrid quantum-classical algorithms optimize circuit parameters to prepare quantum states whose measurement outcomes encode solutions or model distributions, forming the foundation of quantum machine learning and generative modeling.
Within VQAs, the Instantaneous Quantum Polynomial Quantum Circuit Born Machine (IQP-QCBM)~\cite{shepherd09,Liu18} stands out as a promising generative model that uses the Born rule to sample distributions over bitstrings using the amplitudes of parameterized quantum states. IQP-QCBM is particularly interesting to study because it has been proven that the output probability distribution of this model can be hard to simulate classically with additive error~\cite{Bremner10, Bremner16}, suggesting potential quantum advantage.

Recent advancements~\cite{recioarmengol25} showed that when using the Maximum Mean Discrepancy (MMD) metric as the loss function, the training procedure of IQP-QCBM can be efficiently executed on a classical computer (sampling a bitstring from this model requires running it on a quantum device), thereby greatly enhancing its practical usefulness.

Despite those theoretical and practical advantages, two important research questions remain critical regarding IQP-QCBM. First, the ability to execute the training procedure on a classical computer does not imply its \emph{trainability}. In fact, for VQAs, it is widely known that the loss gradient might vanish exponentially fast w.r.t.~the qubit number $n$ - known as the barren plateau (BP) phenomenon~\cite{mcclean18,Cerezo21,Holmes22,Larocca25}, which prevents any gradient- or sampling-based optimization from training the model effectively. To the best of our knowledge, it remains an open question to determine the conditions under which IQP-QCBM does not suffer from the BP issue and is therefore trainable. \\
Second, the sampling hardness argument~\cite{Bremner10,Bremner16,Bremner17} is made on a (randomized) circuit ensemble that satisfies certain requirements on the architecture of the circuit: for instance, \citet{Fujii17} proved that if the connectivity of the circuit is a planar graph, then it can be efficiently simulated. In the quest to identify truly promising IQP architectures, one must carefully verify the sampling hardness of their output distributions, in addition to investigating the trainability. Hence, the question is whether there exists an IQP architecture that is both trainable and non-dequantizable?

Regarding the first question, we derived the closed-form expression of variance of the MMD loss gradient, and we identified the critical rank of the IQP model - the dimension of the space spanned by the anti-commuting generators w.r.t. the observable, which determines the scaling of the loss gradient, and hence indicates the (non-)presence of BP. We summarize the main results of the critical rank in~\cref{tab:examples-summary}. Also, using four example circuit architectures, we demonstrated the scaling of their loss gradients and showed that, in many cases, BP can be mitigated with an appropriately chosen kernel. Regarding the second question, we showed that the circuit connectivity generated by a sparse Erd\H os-R\'enyi graph with $\mathcal{O}(n\log n)$ edges is trainable and its output distribution is hard to sample classically with the arguments from~\cite{Bremner17}.

This paper is organized as follows. \cref{sec:background} recaps the definition of IQP-QCBM, barren plateau, and the related works. We present the main theoretical results on the scaling of the MMD loss function in~\cref{subsec:mmd}, discuss the impact of the kernel choice in~\cref{subsec:kernel choice}, study examples of (non)-presence of BP with four IQP architectures in~\cref{subsec:uniform}, provide a BP mitigation method with Gaussian initialization of parameters in~\cref{subsec:gaussian}, and prove the existence of a trainable and non-dequantizable IQP architecture in~\cref{subsec:non-simulability}. Finally, we discuss the theoretical results and conclude the paper with~\cref{sec:discussion} and~\cref{sec:conclusion}.

\section{Background and related work} \label{sec:background}
In this section, we recap the basic definition and results that will be used in our paper. For clarity, we list, in~\cref{tab:notation}, the math notations used in this paper.
%%%%%%%%%%%%%%%%
\begin{definition}[IQP-QCBM]
\label[definition]{def: IQP}
An $n$-qubit IQP-QCBM is a variational quantum generative model where sampling is done by preparing the quantum state $U_\vttheta \ket{+}^{\otimes n}$, and measuring all the qubits in the Pauli-$X$ basis, where $U_\vttheta$ is a parameterized circuit comprising $D \in \mathcal{O}(\mathrm{poly}(n))$ diagonal gates, i.e., $
U_\vttheta = \prod_{j=1}^D \exp({i\theta_j Z^{\vtsj}})$, where $\theta_j\in[0, 2\pi)$, $\vtsj \in \mathbb{F}_2^n$ and $Z^{\vtsj} = \bigotimes_{k=1}^n Z^{\bm{s}^j_k}$ is a Pauli string generator, where $\bm{s}^j_k\in\{0,1\}$ is the $k$-th component of $\vtsj$. By the Born rule, sampling from the model follows the distribution $q_{\theta}(\vtx) = \Tr[|\vtx \rangle\!\langle \vtx|^{\otimes n}H^{\otimes n} U_\vttheta \rho U_\vttheta^\dagger H^{\otimes n}]$, where $H$ is the Hadamard gate and $\rho = |+\rangle\!\langle+|^{\otimes n}$.
\end{definition}
%%%%%%%%%%
\begin{table}[t]
\centering
\caption{Summary of the main result. First column: we consider four architectures for the IQP circuit (see~\cref{pic:illustration}). Second: scaling of the critical rank $r^\vta$ (defined in~\cref{thm: single symmetric}) of the circuit measured on Pauli operator $X^{\vta}$, which determines the scaling of MMD's derivative. Third and fourth: ``no B.P.'' indicates the absence of barren plateau in MMD with a given kernel choice. ``A.C.'' indicates $L^2$ anti-concentration (see~\cref{def:anti-concentration}) of the probability distribution output by IQP-QCBM.}
\label{tab:examples-summary}
\begin{tabular}{llcc}
\toprule
Architecture 
& $r^a$($|a|=\mathcal{O}(\log(n))$) 
& no B.P. 
& A.C. \\
\hline
Product
& $\mathcal{O}(\log(n))$
& $\checkmark$
& $\times$ \\

2D lattice
& $\mathcal{O}(\log(n))$
& $\checkmark$
& $\times$ \\

Sparse ER
& $\mathcal{O}(\mathrm{polylog}(n))$
& $\checkmark$
& $\checkmark$ \\

Complete
& $n$
& $\times$
& $\checkmark$ \\
\bottomrule
\end{tabular}
\end{table}
%%%%%%%%%%
Previous work \cite{recioarmengol25} suggests training the IQP-QCBM model with the Maximum Mean Discrepancy (MMD) loss, and they show that the MMD loss can be efficiently estimated on a classical computer. Given two distributions $p$ and $q$ over a space $\mathcal{X}$, and a kernel $k: \mathcal{X} \times \mathcal{X} \to \mathbb{R}$ which induces a reproducing kernel Hilbert space $\mathcal{H}$, the MMD function is defined as: $\operatorname{MMD}^2_k(p, q) = \sup_{\substack{f \in \mathcal{H},  \|f\|_{\mathcal{H}} \leq 1}} \left( \mathbb{E}_{x \sim p}[f(x)] - \mathbb{E}_{y \sim q}[f(y)] \right)$, where $\|f\|_{\mathcal{H}}$ is the RKHS norm induced by the kernel $k$~\citep{muandet17}. For IQP-QCBM, MMD loss can be equivalently expressed in terms of the characteristic function of the distribution as follows.

\begin{proposition} [MMD loss in IQP-QCBM]
Consider an IQP model and its output distribution $q_\vttheta$ defined in~\cref{def: IQP}. Denote by $\Lambda$ the Fourier transform of a stationary, bounded kernel function $k$ over $\mathbb{F}_2^n$. Given a target distribution $p$ over $\mathbb{F}_2^n$, the MMD loss between $p$ and $q_\vttheta$ admits the following form~\cite{rudolph24, recioarmengol25}
\begin{equation}
    \mathcal{L}(\vttheta) := \mathrm{MMD}^2_k(p, q_\vttheta) = \operatorname*{\mathbb{E}}_{\vta \sim \Lambda} \left[ \left( C^\vta_p - C^\vta_\vttheta \right)^2 \right]
\end{equation}
where $C^\vta_p = \operatorname*{\mathbb{E}}_{\vtx \sim p}[(-1)^{\vtx \cdot \vta}]$ and $C^\vta_\vttheta=\inner{U_\vttheta^\dagger X^{\vta}U_\vttheta}{\rho }_{\HS}$ are the characteristic function values at $\vta\in\mathbb{F}_2^n$ of distributions $p$ and $q_\vttheta$, respectively.  $X^\vta$ is an $n$-qubit Pauli-$X$ operator supported on the non-zero components of $\vta$. Also, $C^\vta_\vttheta$ can be estimated efficiently with the Monte-Carlo method~\cite{recioarmengol25}.  
\end{proposition}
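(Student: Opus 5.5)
The plan is to expand the squared MMD in kernel form, diagonalize the stationary kernel with the Walsh–Hadamard characters of $\mathbb{F}_2^n$, and then identify the characteristic function of $q_\vttheta$ with an expectation value of $X^\vta$ in the state $U_\vttheta\rho U_\vttheta^\dagger$. We read $\mathrm{MMD}^2_k(p,q)$ as the square of the supremum in the definition, i.e.\ the squared RKHS norm of the difference of mean embeddings. The standard reproducing-kernel computation then gives
\begin{equation*}
\mathrm{MMD}^2_k(p,q) = \mathbb{E}_{x,x'\sim p}k(x,x') - 2\,\mathbb{E}_{x\sim p,\,y\sim q}k(x,y) + \mathbb{E}_{y,y'\sim q}k(y,y').
\end{equation*}

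Stationarity means $k(x,y)=\kappa(x\oplus y)$ for some $\kappa:\mathbb{F}_2^n\to\mathbb{R}$. We expand $\kappa$ in characters:
\begin{equation*}
\kappa(z)=\sum_{\vta\in\mathbb{F}_2^n}\Lambda(\vta)(-1)^{\vta\cdot z}, \qquad \Lambda(\vta)=2^{-n}\sum_z \kappa(z)(-1)^{\vta\cdot z}.
\end{equation*}
Because $k$ is a positive-definite kernel, Bochner's theorem on the finite abelian group $\mathbb{F}_2^n$ gives $\Lambda(\vta)\ge 0$. After normalizing so that $\kappa(0)=\sum_\vta\Lambda(\vta)=1$, $\Lambda$ is a probability distribution. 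The normalization only rescales the loss.

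Substituting the expansion and using $(-1)^{\vta\cdot(x\oplus y)}=(-1)^{\vta\cdot x}(-1)^{\vta\cdot y}$, each cross term factorizes. For example,
\begin{equation*}
\mathbb{E}_{x\sim p,\,y\sim q}k(x,y)=\sum_\vta \Lambda(\vta)\,C^\vta_p C^\vta_q .
\end{equation*}
The three terms then combine into $\sum_\vta\Lambda(\vta)\bigl(C^\vta_p-C^\vta_q\bigr)^2 = \mathbb{E}_{\vta\sim\Lambda}\bigl[(C^\vta_p-C^\vta_q)^2\bigr]$. The characteristic functions are real, so no conjugates appear.

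Next we show that $C^\vta_q=\mathbb{E}_{x\sim q_\vttheta}(-1)^{\vta\cdot x}$ equals $\langle U_\vttheta^\dagger X^\vta U_\vttheta,\rho\rangle_{\HS}$. By definition,
\begin{equation*}
\sum_x q_\vttheta(x)(-1)^{\vta\cdot x}=\Tr\Bigl[\Bigl(\sum_x (-1)^{\vta\cdot x}|x\rangle\!\langle x|\Bigr) H^{\otimes n}U_\vttheta\rho U_\vttheta^\dagger H^{\otimes n}\Bigr].
\end{equation*}
The diagonal operator in parentheses is exactly $Z^\vta$. Conjugating by Hadamards gives $H^{\otimes n}Z^\vta H^{\otimes n}=X^\vta$, and cyclicity of the trace yields $\Tr[U_\vttheta^\dagger X^\vta U_\vttheta\,\rho]$. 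This is the HS inner product, since all operators involved are Hermitian.

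For the efficient-estimation claim, write $X^\vta = H^{\otimes n}Z^\vta H^{\otimes n}$ and commute the diagonal gates past $Z^\vta$. Each generator either commutes or anticommutes with $X^\vta$. Following \cite{recioarmengol25}, this gives
\begin{equation*}
C^\vta_\vttheta=\mathbb{E}_{z\sim\mathrm{Unif}(\mathbb{F}_2^n)}\Bigl[\cos\Bigl(\textstyle\sum_j 2\theta_j(-1)^{\vtsj\cdot z}[\vtsj\cdot\vta]\Bigr)\Bigr],
\end{equation*}
an expectation of a bounded, classically computable function. Hoeffding's inequality therefore gives additive-error estimates with polynomially many samples.

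I expect the main obstacle to be this last identity. Carrying out the phase-kickback computation carefully, including the factors of $2$ and the exact anticommutation condition $\vtsj\cdot\vta=1$, is the delicate part. The rest is routine harmonic analysis on $\mathbb{F}_2^n$.
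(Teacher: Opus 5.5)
Your proposal is correct. The paper gives no proof of this proposition; it imports the identity from \cite{rudolph24, recioarmengol25}. Your argument is the standard derivation from those references: the squared RKHS norm of the mean-embedding difference, Walsh--Hadamard diagonalization of the stationary kernel with Bochner positivity on $\mathbb{F}_2^n$, and then $\sum_{\vtx}(-1)^{\vta\cdot\vtx}|\vtx\rangle\!\langle\vtx|=Z^\vta$ with $H^{\otimes n}Z^\vta H^{\otimes n}=X^\vta$. You carry it out for an arbitrary stationary positive-definite kernel rather than only the Gaussian one.

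Two of your side remarks are actually needed for the statement to hold as written. First, the paper's displayed definition of $\mathrm{MMD}^2_k$ does not square the supremum. Second, writing $\mathbb{E}_{\vta\sim\Lambda}$ presupposes $\kappa(\bm{0})=\sum_\vta\Lambda(\vta)=1$.

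The only ingredient with a counterpart in the paper is your Monte Carlo formula. It agrees with the expansion $C^\vta_\vttheta=\bra{+}^{\otimes n}\prod_{\vtsj\in S^\vta}e^{-2i\theta_jZ^{\vtsj}}\ket{+}^{\otimes n}$ at the start of the appendix proof of \cref{thm: single symmetric}, once $\ket{+}^{\otimes n}$ is expanded in the computational basis.

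The step you flag as delicate closes in two lines:
\begin{itemize}
\item From $X^\vta Z^{\vts}=(-1)^{\vta\cdot\vts}Z^{\vts}X^\vta$ you get $X^\vta e^{i\theta Z^{\vts}}=e^{i(-1)^{\vta\cdot\vts}\theta Z^{\vts}}X^\vta$.
\item Hence $U_\vttheta^\dagger X^\vta U_\vttheta=\prod_{\vtsj\in S^\vta}e^{-2i\theta_jZ^{\vtsj}}\,X^\vta$, and $X^\vta\ket{+}^{\otimes n}=\ket{+}^{\otimes n}$.
\end{itemize}
This confirms both the factor $2$ and the restriction to $\vtsj\cdot\vta=1$. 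The sine part cancels because $z\mapsto z\oplus\vta$ flips every sign $(-1)^{\vtsj\cdot z}$ with $\vtsj\in S^\vta$.
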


A unique property of quantum loss functions of this type is the potential presence of the barren plateau (BP) phenomenon - exponentially vanishing gradient of the loss function w.r.t.~the number of qubits~\cite{mcclean18}. 
%%%%%%%%%%
\begin{definition}[Barren plateau]
Consider a loss function $\mathcal{L}(\vttheta)$ of a variational quantum algorithm acting on $n$ qubits. For a probability measure $\mu$ over $\vttheta$, we say $\mathcal{L}(\vttheta)$ exhibits a barren plateau if
\begin{equation}
\operatorname*{Var}_{\vttheta\sim \mu} \left[\frac{\partial \mathcal{L}(\vttheta)}{\partial{\theta_\ell}}\right] \in \exp({-\Omega(n)})\;.
\end{equation}
\end{definition}
%%%%%%%%%%
In this work, we consider two types of distribution of $\vttheta$: (1) Each $\theta_j$ is i.i.d. and uniform, $\theta_j \sim \mathcal{U}[0,2\pi)$. 2. Each $\theta_j$ is i.i.d. and Gaussian, $\theta_j \sim \mathcal{N}(0,\gamma^2)$. \\

Barren plateaus have a significant impact on the trainability of variational quantum algorithms (VQAs). When using a gradient-based optimization method, barren plateaus prevent finding even a local minima of the loss efficiently. Subsequent analysis~\cite{Holmes22} has extended these results to show that gradient-free optimizers also fail in barren plateaus because cost differences between parameter settings are suppressed exponentially, requiring exponentially many samples or precision to make progress. As a result, barren plateaus imply fundamental optimization/training hardness~\cite{Larocca25}.\\
The BP phenomenon has been extensively studied in VQAs, particularly for supervised learning tasks. For instance, \citet{Cerezo21} first characterized the relation of the variance of the loss gradient to the circuit depth. \citet{Wang21} identified that hardware/circuit noise can induce BP. \citet{Ragone24} proposed a general algebraic approach to bound the variance of the loss gradient with the dynamic Lie algebra (DLA) of the generators of a quantum circuit. \citet{fontana2024characterizing} achieves similar theoretical results with the theory of Lie group representation. We address that the DLA-based approach~\cite{Ragone24}, which requires either the input state $\rho$ or the observable to have a nonzero projection onto the circuit's DLA, can not be applied to IQP whose generators (Pauli $Z$ operators) are orthogonal to the Pauli $X$ observables.

\begin{table}[t]
\centering
\caption{Summary of notations in the paper}
\label{tab:notation}
\setlength{\tabcolsep}{3pt}
\begin{tabular}{ll}
\toprule
$\mathbb{F}_2$ & Finite field with 2 elements: $\{0,1\}$\\
$\rho$ & Input state: $|+\rangle\!\langle+|^{\otimes n}$ \\
$\{\bm{s}^j\}_j$ & A set of bitstrings in $\mathbb{F}_2^n$ indexed by $j$; \\
$Z^{\vts^j}$ & Generator specified by $\vtsj \in \mathbb{F}_2^n $: $\bigotimes_{k=1}^n Z^{s^j_k} $\\
$X^\vta$ & Observable specified by $\vta \in \mathbb{F}_2^n$: $\bigotimes_{k=1}^n X^{\vta_k} $\\
$U_{\vttheta}$ & IQP circuit: $\prod_{j=1}^D \exp \left(i\theta_j Z^{\vts^j}\right)$  \\
$S^{\vta}$ & Bitstrings with odd parity w.r.t. $\vta$: $\{ \vtsj \big| \vta \cdot \vtsj = 1 \}$\\
$C^{\vta}_{\vttheta}$ & Characteristic function value at $\vta$ of $q_\vttheta$\\
$C^{\vta}_p$ & Characteristic function value at $\vta$ of $p$ \\
\bottomrule
\end{tabular}
\end{table}

\section{Main results} \label{sec:main-result}
In this section, we present the main findings of this work. In~\cref{subsec:general MMD results}, we provide closed-form formulas for the variance of the partial derivative of the characteristic function value $C_{\vttheta}^{\vta}$ under \cref{ass:symm-iid-init}. Then, we show how these quantities translate to upper bounds and lower bounds of the variance of the partial derivative of the MMD loss, under \cref{ass:avgcase-target}. In~\cref{subsec:kernel choice}, we discuss when and how the kernel choice can influence the presence or absence of barren plateaus. In~\cref{subsec:uniform}, we focus on the case of uniform initialization, where we derive the scaling of characteristic function values in terms of the critical rank $r^\vta$ and investigate four example IQP architectures in depth. In \cref{subsec:gaussian}, we focus on the case of Gaussian initialization, where we show that with a suitable choice of the Gaussian distribution of $\vttheta$, barren plateaus can be avoided for all IQP architectures. In \cref{subsec:non-simulability}, we establish connections between the critical rank $r^\vta$ and the anti-concentration property, and discuss the relationship between classical non-simulability and trainability of IQP models.

\subsection{Closed-form expression of MMD partial derivatives}
\label{subsec:mmd}
We introduce an assumption required for~\cref{thm: single symmetric}.
%%%%%%%%%%%%%%%%%%%%%%%
\begin{assumption}[Symmetric i.i.d.\ initialization]
\label[assumption]{ass:symm-iid-init}
The parameters $\vttheta=(\theta_1,\dots,\theta_D)$ are independent and identically distributed. Each marginal distribution is symmetric about $0$, i.e., $\Pr(\theta_j = v) = \Pr(\theta_j = -v)$ for all $v\in[0, 2\pi)$ and all $j\in[D]$.

\end{assumption}
\label{subsec:general MMD results}
\begin{theorem}[Variance of characteristic function values and their partial derivatives]
\label{thm: single symmetric}
Consider an $n$-qubit IQP-QCBM $U_\vttheta$ with generators specified by $\{\vtsj\}$. Assume that the initialization satisfies \cref{ass:symm-iid-init}. For a fixed Fourier frequency $\vta \in \mathbb{F}_2^n \setminus \{\mathbf{0}\}$, define $S^{\vta} := \{ \vtsj \big| \vta \cdot \vtsj = 1 \}$, $
m^\vta := \abs{S^{\vta}}$. Let $
\Xi^\vta:= \{J\subseteq S^\vta:\ \sum_{j\in J}\vtsj=\mathbf 0 \}.$ Then, 
%%%%%%%%%
\begin{equation}
     \operatorname*{Var}_\vttheta \left[ C^\vta_\vttheta \right] = 
\left(\sum_{J\in \Xi^\vta} \mu^{|J|} \nu^{m^\vta-|J|}\right) - \kappa^{2m^\vta},
\end{equation}
%%%%%%%%%
where $\mu=\operatorname*{\mathbb{E}}_{\vttheta}[\sin^2(2\theta)]$, $\nu=\operatorname*{\mathbb{E}}_{\vttheta}[\cos^2(2\theta)]$, $\kappa=\operatorname*{\mathbb{E}}_{\vttheta}[\cos(2\theta)]$. Fix $1\leq \ell \leq D$. If $\vtsl \notin S^\vta$, then $ \frac{\partial C^\vta_\vttheta}{\partial \theta_\ell}=0$. If $\vtsl \in S^\vta$, then, 
%%%%%%%%%
\begin{align}
\operatorname*{Var}_\vttheta \left[ \frac{\partial C^\vta_\vttheta}{\partial \theta_\ell} \right]
&= 4\sum_{J\in \Xi^\vta}
\Bigl[
\bm{1}\{\ell\in J\}\,\mu^{|J|-1}\nu^{m^\vta-|J|+1} \notag
\\&+
\bm{1}\{\ell\notin J\}\,\mu^{|J|+1}\,\nu^{m^\vta-|J|-1}\Bigr].
\label{eq: symmetric variance}
\end{align}
%%%%%%%%%
\end{theorem}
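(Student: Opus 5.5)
The plan is to compute $U_\vttheta^\dagger X^\vta U_\vttheta$ exactly in the Heisenberg picture, expand $C^\vta_\vttheta$ as a trigonometric polynomial indexed by subsets $J\subseteq S^\vta$, and then take moments using orthogonality supplied by \cref{ass:symm-iid-init}.

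\textbf{Heisenberg picture.} All generators $Z^{\vtsj}$ are diagonal and mutually commute. Each generator either commutes with $X^\vta$ (when $\vta\cdot\vtsj=0$) or anticommutes with it (when $\vtsj\in S^\vta$). In the commuting case the factor $e^{i\theta_j Z^{\vtsj}}$ cancels against its adjoint. In the anticommuting case, $X^\vta e^{i\theta_j Z^{\vtsj}}=e^{-i\theta_j Z^{\vtsj}}X^\vta$, so
\begin{equation*}
U_\vttheta^\dagger X^\vta U_\vttheta = X^\vta\prod_{j\in S^\vta}\bigl(\cos 2\theta_j + i\sin 2\theta_j\, Z^{\vtsj}\bigr).
\end{equation*}
Expanding the product over subsets $J\subseteq S^\vta$ gives terms $i^{|J|}\prod_{j\in J}\sin2\theta_j\prod_{j\notin J}\cos2\theta_j\; X^\vta Z^{\sum_{j\in J}\vtsj}$. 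Since $X^\vta\ket{+}^{\otimes n}=\ket{+}^{\otimes n}$ and $\bra{+}^{\otimes n}Z^{\bm t}\ket{+}^{\otimes n}=\bm 1\{\bm t=\mathbf 0\}$, only $J\in\Xi^\vta$ survive. Hence
\begin{equation*}
C^\vta_\vttheta=\sum_{J\in\Xi^\vta} c_J\, f_J(\vttheta),\qquad f_J(\vttheta)=\prod_{j\in J}\sin2\theta_j\prod_{j\in S^\vta\setminus J}\cos2\theta_j,
\end{equation*}
with unimodular constants $c_J$. These are powers of $i$; I will not track them, because $C^\vta_\vttheta$ is real and I will work with $|C^\vta_\vttheta|^2$.

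\textbf{Moments.} Under \cref{ass:symm-iid-init}, $\sin2\theta$ is odd, so $\E[\sin2\theta]=0$ and $\E[\sin2\theta\cos2\theta]=0$. For $J\neq J'$, the product $\overline{f_J}f_{J'}$ contains, for some index $j\in J\triangle J'$, exactly one factor $\sin2\theta_j\cos2\theta_j$. By independence its expectation vanishes, so the $f_J$ are orthogonal in $L^2(\mu)$. Therefore
\begin{equation*}
\E|C^\vta_\vttheta|^2=\sum_{J\in\Xi^\vta}\mu^{|J|}\nu^{m^\vta-|J|}.
\end{equation*}
For the mean, $\E[f_J]$ vanishes unless $J=\emptyset$, in which case it equals $\kappa^{m^\vta}$. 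Subtracting the square of the mean gives the first formula.

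\textbf{Derivatives.} If $\vtsl\notin S^\vta$, then $\theta_\ell$ does not appear in $C^\vta_\vttheta$, so the derivative is zero. If $\ell\in S^\vta$, differentiating $f_J$ in $\theta_\ell$ has two cases:
\begin{itemize}
\item if $\ell\in J$, the factor $\sin2\theta_\ell$ becomes $2\cos2\theta_\ell$;
\item if $\ell\notin J$, the factor $\cos2\theta_\ell$ becomes $-2\sin2\theta_\ell$.
\end{itemize}
The derivative terms $g_J=\partial_\ell f_J$ remain pairwise orthogonal. Away from $\ell$ the sin/cos pattern is unchanged. At coordinate $\ell$, if exactly one of $J,J'$ contains $\ell$, the cross term carries $\sin2\theta_\ell\cos2\theta_\ell$, which has mean zero. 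Thus $\E|g_J|^2$ equals $4\mu^{|J|-1}\nu^{m^\vta-|J|+1}$ when $\ell\in J$, and $4\mu^{|J|+1}\nu^{m^\vta-|J|-1}$ when $\ell\notin J$.

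It remains to show the mean of the derivative vanishes. If $\ell\notin J$, $g_J$ contains the odd factor $\sin2\theta_\ell$. If $\ell\in J$, then $J\neq\{\ell\}$: since $\vtsl\in S^\vta$ it is nonzero, so $\{\ell\}\notin\Xi^\vta$. Hence $J$ contains another index carrying a $\sin$ factor. So $\E[g_J]=0$ in all cases, and summing gives \eqref{eq: symmetric variance}.

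\textbf{Main obstacle.} The main care point is the bookkeeping of phases and of repeated generators. If some $\vtsj$ coincide, the indices in $J$ are labels rather than bitstrings, and the orthogonality argument is done per label $j$ using independence of the $\theta_j$. The unimodular phases disappear once the second moment is taken as $\E|C|^2$, using that $C$ is real. The other subtle point is that $\Xi^\vta$ never contains $\{\ell\}$, which is exactly what kills the mean of the derivative.
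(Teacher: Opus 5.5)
Your proposal is correct and follows essentially the same route as the paper. Both arguments reduce $C^\vta_\vttheta$ to a sum over $J\in\Xi^\vta$ of $\sin/\cos$ monomials with unimodular (in fact $\pm1$) coefficients, and then use independence plus the oddness of $\sin 2\theta$ to kill all means and cross terms.

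Your explicit observation that $\{\ell\}\notin\Xi^\vta$ is worth keeping: it follows because $\vtsl\neq\mathbf 0$. It justifies $\mathbb{E}[\partial_\ell C^\vta_\vttheta]=0$, a step the paper only asserts as holding "for the same reason". Equivalently, the paper gets this from its remark that every $J\in\Xi^\vta$ has even size.
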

We can decompose the MMD partial derivative variance into a sum over terms involving characteristic function values and their partial derivatives, as in \cref{eq: decomposition}. In order to derive a lower bound on this quantity that primarily depends on the model (\cref{prop:lower}), we remove the dependency on specific targets by considering an ensemble of target distributions, as stated in \cref{ass:avgcase-target}. Intuitively, highly symmetric target ensembles, e.g., measurement distributions induced by sufficiently random circuits, motivate the assumption.
\begin{assumption}[Unstructured target distribution ensemble]
\label[assumption]{ass:avgcase-target}
The target distribution $p$ is drawn from an ensemble of distributions $\mathcal P$ that exhibits no prior bias toward specific frequencies or inter-frequency correlations. Specifically, we assume that the characteristic function values $C_p^{\vta}$ are mean-zero and pairwise uncorrelated, i.e., $\forall\vta, \vtb \in \mathbb{F}_2^n \setminus \{\bm{0}\}$,
$\operatorname*{\mathbb {E}}_{p\sim\mathcal {P}}[C_p^{\vta}]=0,
$ and $
\operatorname*{\mathbb {E}}_{p\sim\mathcal {P}}\!\left[C_p^{\vta}C_p^{\vtb}\right]=\delta_{\vta,\vtb}\,\sigma_{\vta}^2.$
\end{assumption}
\begin{proposition}[Decomposition of MMD partial derivative variance and average-case lower bound]
\label[proposition]{prop:lower}
Consider an $n$-qubit IQP-QCBM $U_\vttheta$ with generators $\{\vtsj\}$. Assume that the initialization satisfies \cref{ass:symm-iid-init}. Let $\Lambda(\vta)$ be an arbitrary probability mass function of Fourier frequencies. Then, the MMD partial derivative yields the following decomposition.
\begin{align}
\operatorname*{Var}_{\vttheta} \left[ \frac{\partial \mathcal{L}_\vttheta}{\partial \theta_\ell} \right]& = 4 \operatorname*{\mathbb{E}}_{\vttheta} \biggl[ \sum_{\vta, \vtb} \Lambda(\vta) \Lambda(\vtb) \biggl( \nonumber - 2C^\vta_p C_\vttheta^\vtb \frac{\partial C_\vttheta^\vta}{\partial \theta_\ell} \frac{\partial C_\vttheta^\vtb}{\partial \theta_\ell} \\ +&  
 C^\vta_p C^\vtb_p \frac{\partial C_\vttheta^\vta}{\partial \theta_\ell} \frac{\partial C_\vttheta^\vtb}{\partial \theta_\ell} + C_\vttheta^\vta C_\vttheta^\vtb \frac{\partial C_\vttheta^\vta}{\partial \theta_\ell} \frac{\partial C_\vttheta^\vtb}{\partial \theta_\ell} \biggr) \biggr]
 \label{eq: decomposition}
\end{align}
Consider average-case target distributions drawn from a problem ensemble $\mathcal{P}$ that satisfies \cref{ass:avgcase-target}. Then, the average variance can be lower bounded as:
\begin{equation}
\operatorname*{\mathbb{E}}_{p\sim\mathcal{P}} \left[\operatorname*{Var}_{\vttheta} \left[ \frac{\partial \mathcal{L}_\vttheta}{\partial \theta_\ell} \right]\right] \geq
4 \sum_\vta \Lambda^2(\vta)\sigma_\vta^2\operatorname*{Var}_{\vttheta} \left[ \frac{\partial C^\vta_\vttheta}{\partial \theta_\ell} \right]
\label{eq: lower bound}
\end{equation}
where for a fixed $\vta$, $\sigma_\vta^2 =\operatorname*{\mathbb {E}}_{p}\left[(C_p^{\vta})^2\right]$ is the characteristic function value variance in the target distribution ensemble.
\end{proposition}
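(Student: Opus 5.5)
The plan is to differentiate the MMD loss term by term in its Fourier form. We then show the gradient has zero mean under \cref{ass:symm-iid-init}, so its variance is just its second moment. Finally, we average over the target ensemble using \cref{ass:avgcase-target}. All sums over $\vta,\vtb$ are finite and every quantity is bounded, since $|C^\vta_\vttheta|,|C^\vta_p|\le 1$ and the derivatives are bounded. Interchanging $\mathbb{E}_\vttheta$, $\mathbb{E}_p$ and the sums is therefore harmless.

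\emph{Step 1 (gradient).} From $\mathcal{L}(\vttheta)=\sum_\vta \Lambda(\vta)(C^\vta_p-C^\vta_\vttheta)^2$ we get
\begin{equation*}
\frac{\partial \mathcal{L}_\vttheta}{\partial\theta_\ell} = -2\sum_\vta \Lambda(\vta)\,\bigl(C^\vta_p-C^\vta_\vttheta\bigr)\frac{\partial C^\vta_\vttheta}{\partial\theta_\ell}.
\end{equation*}

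\emph{Step 2 (zero mean).} This is the only non-mechanical point.
\begin{itemize}
\item $U_{-\vttheta}=\overline{U_\vttheta}$ (entrywise conjugate in the computational basis), because the generators $Z^{\vtsj}$ are real diagonal. Both $\rho$ and $X^\vta$ are real symmetric matrices.
\item Hence $C^\vta_{-\vttheta}=\overline{C^\vta_\vttheta}=C^\vta_\vttheta$, the last equality because $C^\vta_\vttheta$ is the expectation of a Hermitian operator and so is real. Thus $C^\vta_\vttheta$ is an even function of $\vttheta$.
\item Consequently $\partial_{\theta_\ell}C^\vta_\vttheta$ is odd, and so is $C^\vtb_\vttheta\,\partial_{\theta_\ell}C^\vta_\vttheta$.
\item Under \cref{ass:symm-iid-init}, the joint law of $\vttheta$ is invariant under $\vttheta\mapsto-\vttheta$, so every odd bounded function has zero mean. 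Therefore $\mathbb{E}_\vttheta[\partial_{\theta_\ell}\mathcal{L}_\vttheta]=0$ for every fixed target $p$.
\end{itemize}

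\emph{Step 3 (decomposition).} Since the mean is zero, $\operatorname{Var}_\vttheta[\partial_\ell\mathcal{L}]=\mathbb{E}_\vttheta[(\partial_\ell\mathcal{L})^2]$. Squaring the expression from Step 1 as a double sum gives
\begin{equation*}
4\,\mathbb{E}_\vttheta\Bigl[\sum_{\vta,\vtb}\Lambda(\vta)\Lambda(\vtb)\,(C^\vta_p-C^\vta_\vttheta)(C^\vtb_p-C^\vtb_\vttheta)\,\partial_\ell C^\vta_\vttheta\,\partial_\ell C^\vtb_\vttheta\Bigr].
\end{equation*}
Expanding the product yields four terms. The two mixed terms $-C^\vta_pC^\vtb_\vttheta$ and $-C^\vta_\vttheta C^\vtb_p$ become identical after relabelling $\vta\leftrightarrow\vtb$ in the symmetric double sum. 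They combine into $-2C^\vta_pC^\vtb_\vttheta$, which is exactly \cref{eq: decomposition}.

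\emph{Step 4 (average-case bound).} Take $\mathbb{E}_{p\sim\mathcal P}$ of \cref{eq: decomposition}; $p$ is independent of $\vttheta$, so the two expectations factor.
\begin{itemize}
\item \emph{Mixed term.} It carries a single factor $C^\vta_p$, and $\mathbb{E}_p[C^\vta_p]=0$, so it vanishes.
\item \emph{Target--target term.} Using $\mathbb{E}_p[C^\vta_pC^\vtb_p]=\delta_{\vta,\vtb}\sigma^2_\vta$, it collapses to $4\sum_\vta\Lambda^2(\vta)\sigma^2_\vta\,\mathbb{E}_\vttheta[(\partial_\ell C^\vta_\vttheta)^2]$. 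By Step 2, $\partial_\ell C^\vta_\vttheta$ has zero mean, so this equals $4\sum_\vta\Lambda^2(\vta)\sigma_\vta^2\operatorname{Var}_\vttheta[\partial_\ell C^\vta_\vttheta]$.
\item \emph{Model--model term.} It does not depend on $p$ and equals $4\,\mathbb{E}_\vttheta\bigl[\bigl(\sum_\vta\Lambda(\vta)C^\vta_\vttheta\,\partial_\ell C^\vta_\vttheta\bigr)^2\bigr]\ge 0$.
\end{itemize}
Dropping the nonnegative model--model term gives \cref{eq: lower bound}.

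The main obstacle is Step 2: establishing the parity of $C^\vta_\vttheta$ under $\vttheta\mapsto-\vttheta$. Once it is in place, the rest is bookkeeping of a finite double sum and the two moment conditions of \cref{ass:avgcase-target}.
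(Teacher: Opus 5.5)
Your proof is correct and follows essentially the same route as the paper: differentiate, use the symmetric initialization to kill all mean terms, expand the double sum and merge the two mixed terms by relabelling $\vta\leftrightarrow\vtb$, drop the nonnegative four-copy square, and average over $p$ using the two moment conditions of \cref{ass:avgcase-target}. The one difference is how you justify the zero-mean step: you use joint evenness of $C^\vta_\vttheta$ under $\vttheta\mapsto-\vttheta$, which follows from $U_{-\vttheta}=\overline{U_\vttheta}$; this is cleaner than the paper's appeal to oddness in $\theta_\ell$ alone, which fails whenever some $J\in\Xi^\vta$ contains $\ell$, because that term of $\partial_{\theta_\ell} C^\vta_\vttheta$ carries $\cos(2\theta_\ell)$ and has zero mean only through its other sine factors.
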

%%%%%%%%%%%%%%%%
A more careful analysis gives a tighter lower bound on the four-copy term in~\cref{eq: decomposition}:
$$
\operatorname*{\mathbb{E}_\vttheta}\left(C_\vttheta^\vta C_\vttheta^\vtb \frac{\partial C_\vttheta^\vta}{\partial \theta_\ell} \frac{\partial C_\vttheta^\vtb}{\partial \theta_\ell}\right)=2^{-3n+4-\abs{K_{\vta,\vtb}}}\,,
$$
where $K_{\vta,\vtb} = \{\vtsj| \vtsj\cdot\vta=\vtsj\cdot\vtb=1\}$, which sums to an exponential decaying positive value for any $\Lambda$. Thus, we drop it in \cref{eq: lower bound} for simplicity.
We remark that \cref{eq: lower bound} depends on squared spectral weights $\Lambda^2(\vta)$. This dependence is structural because it is the variance of a weighted sum over frequencies $\vta$. On the other hand, we also provide the following loose upper bound that depends on $\Lambda(\vta)$, which can only be saturated for extremely structured datasets.  
\begin{proposition}[Upper bound for MMD partial derivative variance] 
\label[proposition]{prop:upper} Consider an $n$-qubit IQP-QCBM $U_\vttheta$ with generators $\{\vtsj\}$. Consider an arbitrary initialization scheme. Consider an arbitrary probability mass function of Fourier frequencies $\Lambda(\vta)$. The MMD partial derivative variance can be upper bounded as:
\begin{equation}
    \operatorname*{Var}_{\vttheta} \left[ \frac{\partial \mathcal{L}_\vttheta}{\partial \theta_\ell} \right] \leq 16 \sum_\vta \Lambda(\vta)\left[  \operatorname*{Var}_{\vttheta} \left[ \frac{\partial C^\vta_\vttheta}{\partial \theta_\ell} \right]\right]
\end{equation}
\end{proposition}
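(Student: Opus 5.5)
The plan is to write the MMD derivative as a $\Lambda$-weighted average of per-frequency terms and pass the variance inside the average by convexity. Differentiating $\mathcal{L}(\vttheta)=\sum_\vta \Lambda(\vta)(C^\vta_p-C^\vta_\vttheta)^2$ gives
\begin{equation*}
\frac{\partial \mathcal{L}_\vttheta}{\partial\theta_\ell}=\sum_\vta \Lambda(\vta)\,Y_\vta,\qquad Y_\vta:=2\bigl(C^\vta_\vttheta-C^\vta_p\bigr)\frac{\partial C^\vta_\vttheta}{\partial\theta_\ell}.
\end{equation*}
Since $\Lambda$ is a probability mass function, Jensen's inequality applied to the centred variables gives
\begin{equation*}
\operatorname*{Var}_\vttheta\Bigl[\sum_\vta\Lambda(\vta)Y_\vta\Bigr]=\operatorname*{\mathbb{E}}_\vttheta\Bigl[\Bigl(\sum_\vta\Lambda(\vta)(Y_\vta-\operatorname*{\mathbb{E}}Y_\vta)\Bigr)^2\Bigr]\le\sum_\vta\Lambda(\vta)\operatorname*{Var}_\vttheta[Y_\vta].
\end{equation*}
This already yields the linear (not squared) dependence on $\Lambda(\vta)$ in the statement, and it needs no assumption on the initialization. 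It therefore remains to show $\operatorname*{Var}_\vttheta[Y_\vta]\le 16\operatorname*{Var}_\vttheta[\partial_\ell C^\vta_\vttheta]$ for each fixed $\vta$.

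For the per-frequency bound I set $f:=C^\vta_\vttheta-C^\vta_p$ and $g:=\partial_\ell C^\vta_\vttheta$. Both characteristic function values lie in $[-1,1]$, so $|f|\le 2$. I write $fg-\operatorname*{\mathbb{E}}[f]\operatorname*{\mathbb{E}}[g]=f\,(g-\operatorname*{\mathbb{E}}g)+\operatorname*{\mathbb{E}}[g]\,(f-\operatorname*{\mathbb{E}}f)$. Using $\operatorname*{Var}[Z]\le\operatorname*{\mathbb{E}}[(Z-c)^2]$ for any constant $c$, together with $(u+v)^2\le 2u^2+2v^2$, I get
\begin{equation*}
\operatorname*{Var}[fg]\le 2\cdot 4\operatorname*{Var}[g]+2(\operatorname*{\mathbb{E}}g)^2\operatorname*{Var}[f].
\end{equation*}
With the prefactor $4$ from $Y_\vta=2fg$, the first term gives $32\operatorname*{Var}[g]$. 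A sharper route is to take $c=0$ instead, which gives $\operatorname*{Var}[Y_\vta]\le 4\cdot 4\,\operatorname*{\mathbb{E}}[g^2]$. This produces exactly the constant $16$, provided one may identify $\operatorname*{\mathbb{E}}[g^2]$ with $\operatorname*{Var}[g]$, i.e.\ $\operatorname*{\mathbb{E}}_\vttheta[\partial_\ell C^\vta_\vttheta]=0$. I will therefore aim for the $c=0$ route.

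The main obstacle is justifying $\operatorname*{\mathbb{E}}[\partial_\ell C^\vta_\vttheta]=0$, or otherwise absorbing the mean term, for an arbitrary initialization. I would first use the IQP structure. Conjugating $X^\vta$ through the commuting diagonal gates, only gates with $\vtsj\in S^\vta$ act, and $C^\vta_\vttheta$ expands as a sum over $J\in\Xi^\vta$-type index sets of products of $\cos(2\theta_j)$ and $\sin(2\theta_j)$ (this is the expansion underlying \cref{thm: single symmetric}). Each such term contains an odd number of sines exactly when $\ell$ is moved between $J$ and its complement, and $\partial_\ell$ flips $\cos(2\theta_\ell)\leftrightarrow\sin(2\theta_\ell)$. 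If a direct argument for general measures fails, the fallback is this: since $C^\vta_\vttheta$ is $\pi$-periodic in $\theta_\ell$, $\operatorname*{\mathbb{E}}[\partial_\ell C]$ vanishes whenever the marginal of $\theta_\ell$ is shift- or reflection-invariant. That covers the uniform and centred Gaussian schemes used in the rest of the paper, and I would state it as the condition under which the constant $16$ is attained. In full generality I would record the mean contribution $8(\operatorname*{\mathbb{E}}g)^2\operatorname*{Var}[f]$ as the remaining term to be controlled.
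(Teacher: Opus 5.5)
Your chain of inequalities is essentially the paper's. The paper bounds $\operatorname{Var}_{\vttheta}[\partial_\ell \mathcal{L}_\vttheta] \le \mathbb{E}_{\vttheta}[(\partial_\ell \mathcal{L}_\vttheta)^2]$, applies Cauchy--Schwarz to the $\Lambda$-average (the same convexity you use, only before centring rather than after), and uses $(C^\vta_p - C^\vta_\vttheta)^2 \le 4$. This reaches $16\sum_\vta \Lambda(\vta)\,\mathbb{E}_{\vttheta}[(\partial_\ell C^\vta_\vttheta)^2]$ for any initialization. Its final step then simply asserts $\mathbb{E}_{\vttheta}[\partial_\ell C^\vta_\vttheta]=0$ to turn the second moment into a variance. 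That is exactly the point you flag, and the fact is only justified under \cref{ass:symm-iid-init}.

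The obstacle you identify is not a weakness of your route: for a literally arbitrary initialization the stated inequality is false. Take $n=1$, a single generator $Z$ and $\vta=1$, so that $C^{1}_\theta=\cos 2\theta$ and $\partial_\theta C^{1}_\theta=-2\sin 2\theta$; the frequency $0$ contributes nothing. Let $\theta$ be uniform on $\{\pi/12,5\pi/12\}$. Then $\partial_\theta C^{1}_\theta\equiv -1$, so the right-hand side is $0$. However, $\partial_\theta\mathcal{L} = 2\Lambda(1)\,(C^{1}_p-\cos 2\theta)$ has variance $3\Lambda(1)^2>0$. So the correct general statement is the second-moment bound you obtain with $c=0$, and the variance form needs a mean-zero hypothesis. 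Stating it that way, as you propose, is the right repair.

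One correction to your fallback condition. Symmetry of the marginal of $\theta_\ell$ alone does not give $\mathbb{E}[\partial_\ell C^\vta_\vttheta]=0$, because the terms with $\ell\in J$ contribute $\cos(2\theta_\ell)$ times sines of the other parameters. Two conditions do work:
\begin{itemize}
\item Invariance of the joint law under $\vttheta\mapsto-\vttheta$. Every $J\in\Xi^\vta$ has even size because $\vta\cdot\vtsj=1$ on $S^\vta$, so $C^\vta_\vttheta$ is even and $\partial_\ell C^\vta_\vttheta$ is odd. This covers \cref{ass:symm-iid-init}.
\item $\theta_\ell$ uniform on $[0,2\pi)$ and independent of the other parameters. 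This is your periodicity argument, which needs the independence you did not state.
\end{itemize}
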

\cref{prop:lower} and \cref{prop:upper} relate the scaling of individual characteristic function values and the spectral distribution to the (non)presence of barren plateaus, which is needed for the rest of the paper.

\subsection{Importance of kernel selection}
\label{subsec:kernel choice}
The kernel choice has a big impact on the presence or absence of barren plateaus.
As an immediate consequence of \cref{prop:upper}, if $\Lambda(\vta)$ is extremely flat such that it is exponentially small for all $\vta$, the scaling of the MMD partial derivative will also be exponentially small. This phenomenon is not specific to IQP-QCBMs, but also discussed in classical MMD literature, e.g. \cite{gretton12}. 

In practice, one should choose a kernel such that its spectral distribution $\Lambda$ places at least a polynomial fraction of weights on subexponential decaying characteristic function values to ensure a strong enough signal for effective training. We formalize this idea as the ``partial-spectrum trainability'' defined below. However, if all the characteristic function values have exponential decay, then no choice of the kernel can help avoid barren plateaus.
%%%%%%%%%%%%%%%%%%%
\begin{definition}[Partial-spectrum trainability at initialization]
\label[definition]{def:pst}
Fix a nonempty subset of Fourier frequencies $\mathcal{A}\subseteq\mathbb{F}_2^n$.
We say an IQP-QCBM is \emph{$\mathcal{A}$-trainable at initialization} if 
$\forall a\in\mathcal{A}$, there exists a parameter $\theta_\ell$ such that
\begin{equation}
\operatorname*{Var}_{\vttheta}\left[\frac{\partial C^\vta_\vttheta}{\partial \theta_\ell}\right] \in \Omega\bigl(\operatorname{poly}(n^{-1})\bigr).
\end{equation}
\end{definition}

\begin{remark}[Partial-spectrum trainability implies the existence of a spectral density that avoids barren plateaus at initialization] Suppose an IQP-QCBM is $\mathcal{A}$-trainable for some nonempty subset of frequencies $\mathcal{A} \subseteq \mathbb{F}_2^n$. Then one can choose a probability mass function $\Lambda(\vta)$ that places inverse-polynomial mass on $\mathcal{A}$ such that the MMD partial derivative variance does not vanish exponentially for some $\theta_\ell$. Therefore, having partial-spectrum trainability is sufficient to avoid barren plateaus. 
\end{remark}

As we will see in the next section, for IQP models under uniform initialization, characteristic function values at high-weight frequencies typically decay faster than those at low-weight frequencies. A common kernel that suits the training of the low-weight frequencies is the Gaussian kernel with bandwidth $\sigma=\Theta(n)$. Its formula is given by $\Lambda_\sigma(\vta) = (1-\tau(\sigma))^{n-\abs{a}} \tau(\sigma)^\abs{a}$ where $\tau(\sigma) = [1-\exp(-1/2 \sigma)]/2$. If $\sigma=\Theta(n)$, $\Lambda_\sigma$ places a constant fraction of its mass on low-weight frequencies, while if $\sigma=\mathcal{O}(1)$, low-weight frequencies receive exponentially small weights.

\subsection{Presence and absence of barren plateaus under uniform initialization} \label{subsec:uniform}
The discussions before apply to all symmetric initializations (\cref{ass:symm-iid-init}). Now we focus on the case where each parameter is i.i.d., and initialized by uniform distribution, the default choice in quantum machine learning. The results are stated as \cref{thm:uniform}. We notice that the scaling of the characteristic function values and their partial derivatives yields extremely concise forms: they are fully determined by the quantities $r^\vta$, which for a fixed $\vta$ is essentially the number of $\mathbb{F}_2$-linearly independent generators in the set $S^\vta$. We give $r^\vta$ the name ``critical rank''.

\begin{theorem}[Characteristic function value partial derivative variance under uniform initialization]
\label{thm:uniform}
Consider an $n$-qubit IQP-QCBM $U_\vttheta$ with generators $\{\vtsj\}$. Consider that the parameters are initialized from i.i.d. uniform distribution $\theta_j \sim\mathcal{U}[0,2\pi)$. For a fixed Fourier frequency $\vta \in \mathbb{F}_2^n \setminus \{\mathbf{0}\}$, define $S^{\vta} := \{ \vtsj \big| \vta \cdot \vtsj = 1 \}$ and define the critical rank $r^\vta := \dim \operatorname{span}_{\mathbb{F}_2}\left( S^{\vta} \right)$, which is the number of $\mathbb{F}_2$-linearly independent generators. Then, \begin{equation}
\operatorname*{Var}_{\vttheta} \left[ C^\vta_\vttheta\right] = 2^{-r^\vta}. 
\end{equation} Fix $1\leq \ell \leq D$. If $\vtsl \notin S^\vta$, then $ \frac{\partial C^\vta_\vttheta}{\partial \theta_\ell}=0$. If $\vtsl \in S^\vta$, then
\begin{equation}
\operatorname*{Var}_{\vttheta} \left[ \frac{\partial C^\vta_\vttheta}{\partial \theta_\ell} \right] = 2^{\,2-r^\vta}. \end{equation}
\end{theorem}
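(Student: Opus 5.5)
The plan is to specialize \cref{thm: single symmetric} to the uniform distribution and then count the elements of $\Xi^\vta$ by linear algebra over $\mathbb{F}_2$.

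For $\theta\sim\mathcal{U}[0,2\pi)$ we have $\kappa=\mathbb{E}[\cos 2\theta]=0$ and $\mu=\nu=\tfrac12$. Since uniform initialization satisfies \cref{ass:symm-iid-init}, the variance formula collapses to
\[
\operatorname*{Var}_\vttheta[C^\vta_\vttheta]=\sum_{J\in\Xi^\vta}2^{-m^\vta}=|\Xi^\vta|\,2^{-m^\vta}.
\]
For the derivative, with $\vtsl\in S^\vta$, both bracketed terms in \cref{eq: symmetric variance} equal $2^{-m^\vta}$. Exactly one indicator is active for each $J$, so
\[
\operatorname*{Var}_\vttheta\!\left[\frac{\partial C^\vta_\vttheta}{\partial\theta_\ell}\right]=4\,|\Xi^\vta|\,2^{-m^\vta}.
\]
The case $\vtsl\notin S^\vta$ is already covered by \cref{thm: single symmetric}. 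Both claims therefore reduce to the single identity $|\Xi^\vta|=2^{m^\vta-r^\vta}$.

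To prove that identity, I will view $S^\vta$ as a multiset of $m^\vta$ vectors in $\mathbb{F}_2^n$, indexed by the generators $j$ with $\vta\cdot\vtsj=1$; repeated generators count separately. Consider the $\mathbb{F}_2$-linear map
\[
\phi:\mathbb{F}_2^{m^\vta}\to\mathbb{F}_2^n,\qquad \bm{1}_J\mapsto\sum_{j\in J}\vtsj .
\]
Subsets $J\subseteq S^\vta$ correspond bijectively to indicator vectors $\bm{1}_J$. Hence $\Xi^\vta$ is exactly $\ker\phi$, and it is a subspace. The image of $\phi$ is $\operatorname{span}_{\mathbb{F}_2}(S^\vta)$, which has dimension $r^\vta$. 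Rank–nullity then gives $\dim\ker\phi=m^\vta-r^\vta$, so $|\Xi^\vta|=2^{m^\vta-r^\vta}$.

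Substituting back yields $\operatorname{Var}[C^\vta_\vttheta]=2^{-r^\vta}$ and $\operatorname{Var}[\partial_\ell C^\vta_\vttheta]=2^{2-r^\vta}$.

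The main obstacle is bookkeeping rather than substance. First, indexing must be consistent: $m^\vta$ and $\Xi^\vta$ count generator indices, including duplicates, while $r^\vta$ is the dimension of the span. Second, I will double-check that the $\ell\in J$ and $\ell\notin J$ terms genuinely coincide at $\mu=\nu$, so that the derivative variance is exactly four times the function variance.
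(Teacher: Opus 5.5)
Your proposal is correct and matches the paper's proof step for step. You specialize \cref{thm: single symmetric} to $\mu=\nu=\tfrac12$, $\kappa=0$ to get $2^{-m^{\vta}}|\Xi^{\vta}|$ and $2^{2-m^{\vta}}|\Xi^{\vta}|$, then identify $\Xi^{\vta}$ with the kernel of $\bm{1}_J\mapsto\sum_{j\in J}\vtsj$ and apply rank--nullity; your treatment of repeated generators as a multiset is slightly more careful than the paper's.

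One caveat you share with the paper: discarding $\kappa^{2m^{\vta}}$ requires $m^{\vta}\ge 1$. If $S^{\vta}=\emptyset$, then $C^{\vta}_{\vttheta}\equiv 1$ has variance $0$ rather than $2^{-r^{\vta}}=1$, so the first identity holds only for $S^{\vta}\neq\emptyset$ (or when read as a second-moment identity). The derivative identity is unaffected, because $\vtsl\in S^{\vta}$ forces $m^{\vta}\ge 1$.
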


The quantities $r^\vta$ depend on $\vta$, and the type and number of generators. The variances are monotone non-increasing in the sense that they will either be unchanged or decreased, but never increased, by adding more generators to a model. This observation is consistent with the general expressivity-trainability trade-off appearing in the quantum supervised learning context \cite{mcclean18, Holmes22}. The difference is, while in most quantum models the variances can only be asymptotically bounded by decreasing functions of the circuit depth (number of repeating gate layers), here in IQP-QCBM, once the set of generators is specified, one can compute the critical ranks $r^\vta$ and hence the variances exactly.  

To get a deeper understanding, below we walk through a few representative models with different generator sets, compute the critical ranks $r^\vta$, and hence check whether they are partial-spectrum trainable. We start with one of the simplest architectures, product states.

%%%%%%%%%%%%%
\begin{figure}[t]
  \begin{center}    \centerline{\includegraphics[width=\columnwidth]{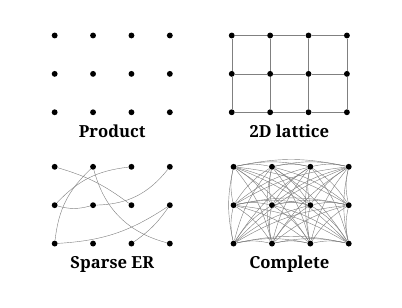}}
    \caption{Illustration of the IQP architectures considered in this paper: product state (\cref{eg:product}), 2D lattice (\cref{eg:lattice}), sparse Erd\H os-R\'enyi graph (\cref{eg:sparse}), and complete graph (\cref{eg:complete}).}
    \label{pic:illustration}
  \end{center}
\end{figure}
%%%%%%%%%%%%%
\begin{example}[Product state]
\label[example]{eg:product}
Consider the generator set that contains all the single-qubit generators, and nothing else. Then, fix a frequency $\vta \in \mathbb{F}_2^n \setminus \{\bm{0}\}$, the generators that anticommute with $X^\vta$ are exactly those whose support overlap with the support of $X^\vta$, i.e., $S^\vta=\{\bm{e}_k|a_k=1\}$, where $\bm{e}_k$ is the vector with $1$ on the $k$-th position. It is clear that all $\bm{e}_k$ are $\mathbb{F}_2$-linearly independent, so $r^\vta = \abs{a}$. Hence, low-weight Fourier frequencies with $\abs{\vta} \in \mathcal{O}(\log(n))$ will have polynomial decay variances, while high-weight frequencies with $\abs{\vta} \in \Theta(n)$ will have exponential decay variances. 
\end{example}

The model considered in \cref{eg:product} produces no entanglement and can only express product distributions. Even in such a simple model, characteristic function values at high-weight frequencies are already exponential decaying. This phenomenon is not unique to IQP-QCBM, but in fact an instance of global-observable-induced barren plateaus, which has already been studied in other quantum machine learning settings \cite{Cerezo21,rudolph24}. We move on with a more expressive generator set next.

\begin{example}[2D lattice]
\label[example]{eg:lattice}
Let $G=(V,E)$ be a two-dimensional rectangular lattice graph on $n=\abs{V}$ qubits, where $(u,v)\in E$ are the edges of the lattice. Without loss of generality, we can consider $V = [n]$ for IQP circuits. Consider the generator set that contains all the single-qubit generators, and all the nearest-neighbor two-qubit generators, i.e, $\{\bm{e}_v : v\in V\} \cup \{\bm{e}_u+\bm{e}_v : (u,v)\in E\}$, where $\bm{e}_v\in \mathbb{F}_2^n$ is a basis vector consisting of all zero entries except at index $v$. Fix a Fourier frequency $\vta \in \mathbb{F}_2^n \setminus \{\bm{0}\}$, let
\begin{equation*}
    A:=\{v\in V: a_v=1\}
\end{equation*}
be the support of the observable $X^\vta$. For a single-qubit generator $\bm{e}_v$, we have that $\bm{e}_v\in S^\vta$ iff $v\in A$. For a two-qubit generator $\bm{e}_u+\bm{e}_v$, we have that $\vta \cdot (\bm{e}_u+\bm{e}_v)=a_u+a_v\ (\mathrm{mod}\ 2)$. Hence $\bm{e}_u+\bm{e}_v\in S^\vta$ iff exactly one of $\{u,v\}$ lies in $A$. Equivalently, the two qubit generators in $S^\vta$ are precisely the boundary edges of the cut $(A, V\setminus A)$, because 

Define the open neighborhood of $A$ by
\begin{equation}
B := \{u\in V: \exists v\in A  \text{ s.t. } (u,v)\in E\}.
\end{equation} and the closed neighborhood by $N[A] = A \cup B$. 
One can check that $r^\vta = \abs{N[A]}$, which depends not only on $\abs{\vta}$, but also on the topology of $A$. For the barren plateau analysis, it is convenient to simply bound the quantity $\abs{N[A]}$.
From above, we can use the fact that each $v\in A$ has at most $4$ distinct neighbors. Hence, we have
\begin{equation}
\abs{\vta}=\abs{A} \leq  r^\vta=\abs{N[A]} \leq \abs{A}+4\abs{A} = 5\abs{\vta}.
\end{equation}
Hence, low-weight Fourier frequencies with $\abs{\vta} \in \mathcal{O}(\log(n))$ will have polynomial decay variances, while high-weight frequencies with $\abs{\vta} \in \Theta(n)$ will have exponential decay variances.
\end{example}
%%%%%%%%%
The lattice architecture considered in \cref{eg:lattice} is already interesting because the model becomes universal for quantum computation under postselection \cite{Fujii17}. Consequently, by standard postselection-based hardness arguments, the existence of an efficient zero-error classical weak simulator would imply a collapse of the polynomial hierarchy to its third level. As a side remark, if no single-qubit generators are included, then the output distributions would be efficiently classically simulable \cite{Fujii17}.

\begin{example}[Sparse Erd\H os-R\'enyi graph]
\label[example]{eg:sparse}
Consider the generator set that contains all the single-qubit generators, and a random subset of two-qubit generators $\{\bm{e}_u+\bm{e}_v\}$, where each pair $(i,j)$ is contained independently with probability $p=c \ln(n)/n$ for some constant $c=\Theta(1)$. Equivalently, let $G=(V,E)$ be an Erdos-Renyi graph with probability $c \ln(n)/n$. A two-qubit generator $\bm{e}_u+\bm{e}_v$ is included iff $(u,v)\in E$.

Just as in \cref{eg:lattice}, for a fixed Fourier frequency $\vta \in \mathbb{F}_2^n \setminus \{\bm{0}\}$, let $A$ be the support of the observable $X^\vta$, and denote the open and closed neighborhoods of $A$ by $B$ and $N[A]$ respectively. The equation $r^\vta = \abs{N[A]}$ also holds true here. For each  vertex $u\notin A$ the event $\{u\notin B\}$ occurs iff $u$ has no edge to any vertex in $A$.
\begin{equation}
\Pr[u\notin B] = (1-p)^{\abs{\vta}}
\end{equation}
Also, the events $\{u\in B\}$ are independent for each $u \notin A$.
Therefore, $\abs{B} \sim \mathrm{Binomial}(n-\abs{\vta},\,q)$, 
where $q = 1-(1-p)^{\abs{\vta}}$ and \begin{equation}\operatorname*{\mathbb{E}}_G[r^\vta] = \abs{\vta} + (n-\abs{\vta})q. \label{eq:sparse ra}
\end{equation}
Using that $(1-p)^\abs{\vta} = \exp(-p\abs{\vta} + \mathcal{O}(p^2\abs{\vta}))$, we can get that $q=p\abs{\vta} + \mathcal{O}((p\abs{\vta})^2)$ and $\operatorname*{\mathbb{E}}_G[r^\vta] \in \mathcal{O}(\log(n)^2)$ if $\abs{\vta} \in \mathcal{O}(\log(n))$, while $q = \Theta(1)$ and $\operatorname*{\mathbb{E}}_G[r^\vta] \in \Theta(n)$ if $\abs{\vta} \in \Theta(n)$. By Chernoff's bound, with high probability, for a single Erd\H os-R\'enyi graph, low-weight Fourier frequencies with $\abs{\vta} \in \Theta(1)$ will have polynomial decay, ($\abs{\vta} \in \mathcal{O}(\log(n))$ gives subexponential but not polynomial), while high-weight frequencies with $\abs{\vta} \in \Theta(n)$ will have exponential decay variances.
\end{example}

Note that the architecture considered in \cref{eg:sparse} is the same as the ``sparse IQP circuits'' proposed by \cite{Bremner17}, except that while we consider a continuous uniform distribution over $\vttheta$, they took $\vttheta$ uniformly from a discrete set. In particular, they showed that the existence of an efficient classical sampler that can output distributions
within small total variation distance to the true distributions of a constant fraction of instances from their ensemble of ``sparse IQP circuits'' would imply a collapse of the polynomial hierarchy
to its third level. \cite{Bremner17}

\begin{example}[Complete graph] \label[example]{eg:complete} Consider the generator set that contains all the single-qubit generators and all the two-qubit generators. Fix a frequency $\vta \in \mathbb{F}_2^n \setminus \{\bm{0}\}$. Take any qubit that is in the support of $A$, call the supported single-qubit generator $\bm{e}_i$. Then, for all the qubits not in the support of $A$, the two-qubit generator $\bm{e}_i+\bm{e}_j \in S^\vta$. These two-qubit generators together with single-qubit generators supported by $A$ already form a basis. Therefore, for any frequency $\vta \in \mathbb{F}_2^n \setminus \{\bm{0}\}, r^\vta =n$, and the variances decay exponentially. 
\end{example}

Despite larger generator sets, the architectures in \cref{eg:lattice} and \cref{eg:sparse} still have polynomial decay variances for low-weight frequencies. Therefore, just like the product state architecture in \cref{eg:product}, by choosing a kernel with polynomial weights on the low-weight part of the spectrum, barren plateaus in the MMD can be avoided. However, for some even larger generator sets, e.g. that of \cref{eg:complete}, one will find that for any frequency $\vta \in \mathbb{F}_2^n \setminus \{\bm{0}\}$, the Fourier frequencies and their partial derivatives decay exponentially. Then no kernel can help avoid barren plateaus, as implied by \cref{prop:upper}.

\subsection{Absence of barren plateaus under Gaussian initialization} \label{subsec:gaussian}
Here we apply \cref{thm: single symmetric} to the case where each parameter is i.i.d., and initialized by Gaussian distribution. We find that unlike uniform initialization, all characteristic function values, including those of high-weight frequencies, now have polynomial decay scalings. This finding is analogous to the results of \cite{zhang22}, which shows Gaussian initialization avoids barren plateaus for a variant of hardware-efficient circuit. This finding also supports the heuristic results on large gradient scalings in \cite{recioarmengol25}. However, we also note that under Gaussian initialization, the gates will be more closely concentrated around identity and hence more likely to be simulated classically. 
\begin{theorem}[Characteristic function value partial derivative variance under Gaussian initialization]
\label{thm:poly_lb_gaussian_mmd}
Consider an $n$-qubit IQP-QCBM $U_\vttheta$ with generators $\{\vtsj\}$ of arbitrary architecture. Consider that the parameters $\vttheta$ are i.i.d. and Gaussian $\theta_j \sim\mathcal{N}(0,\gamma^2)$, with variance $\gamma^2 = c/D$ for arbitrary constant $c$. Fix $1\leq \ell \leq D$. For any frequency $\vta \in \mathbb{F}_2^n$, if $\in S^\vta$, then 
\begin{equation}
\operatorname*{Var}_{\vttheta} \left[ \frac{\partial C^\vta_\vttheta}{\partial \theta_\ell} \right] = \Omega\left(\mathrm{poly}(n^{-1})\right).
\end{equation} 
If $\vtsl \notin S^\vta$, then $\frac{\partial C^\vta_\vttheta}{\partial \theta_\ell}=0$.
\end{theorem}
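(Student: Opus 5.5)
The plan is to specialize the closed form of \cref{thm: single symmetric} to the Gaussian case and keep only one nonnegative term. The zero-derivative case $\vtsl\notin S^\vta$ is already part of \cref{thm: single symmetric}, so only $\vtsl\in S^\vta$ needs work. A centered Gaussian satisfies \cref{ass:symm-iid-init}; the parameters live on $\mathbb{R}$ rather than $[0,2\pi)$, but the trigonometric dependence makes this harmless, and the derivation of \cref{eq: symmetric variance} only uses the moments $\mu,\nu$. For $\theta\sim\mathcal N(0,\gamma^2)$ we have $\mathbb{E}[\cos(4\theta)]=e^{-8\gamma^2}$. Hence
\begin{equation*}
\nu=\tfrac12\bigl(1+e^{-8\gamma^2}\bigr),\qquad \mu=\tfrac12\bigl(1-e^{-8\gamma^2}\bigr).
\end{equation*}
With $\gamma^2=c/D$, this gives $\nu\ge 1-4\gamma^2=1-4c/D$ and $\mu\le 4c/D$.

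Every summand in \cref{eq: symmetric variance} is nonnegative, so it suffices to keep a single $J$. The empty set $J=\emptyset$ always belongs to $\Xi^\vta$, since its sum is $\mathbf 0$, and $\ell\notin\emptyset$. This term contributes $4\mu\,\nu^{m^\vta-1}$, so
\begin{equation*}
\operatorname*{Var}_\vttheta\left[\frac{\partial C^\vta_\vttheta}{\partial\theta_\ell}\right]\ \ge\ 4\mu\,\nu^{m^\vta-1}.
\end{equation*}
(The reindexed form $4\mu^{|J|+1}\nu^{m-|J|-1}$ corresponds to the derivative pulling out a $\sin$ factor for $\ell$; the empty set gives exactly the "all other gates near identity" term.)

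The two factors are then bounded separately. Since $m^\vta\le D$, we get $\nu^{m^\vta-1}\ge(1-4c/D)^{D}\ge e^{-4c}(1-O(1/D))$, which is a constant once $D>4c$; for small $D$ it is bounded anyway. For the other factor, $1-e^{-x}\ge x/2$ on $[0,1]$ gives $\mu\ge 2\gamma^2=2c/D$ when $8\gamma^2\le1$, and otherwise $\mu$ is a constant. Together,
\begin{equation*}
\operatorname*{Var}_\vttheta\left[\frac{\partial C^\vta_\vttheta}{\partial\theta_\ell}\right]\ \ge\ \frac{8c\,e^{-4c}}{D}\,(1-o(1)) .
\end{equation*}
Because $D\in\mathcal O(\mathrm{poly}(n))$, this is $\Omega(\mathrm{poly}(n^{-1}))$, uniformly in $\vta$ and in the architecture.

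The main obstacle is justifying the use of \cref{thm: single symmetric} for unbounded Gaussian parameters. I would handle it by noting that the proof of that theorem uses only independence, symmetry, and the three trigonometric moments, all of which are well-defined here. Equivalently, one can reduce $\theta$ modulo $2\pi$, which preserves both the circuit and the symmetry of the distribution. The constant bookkeeping is then routine, though one should take $c>0$ fixed, independent of $n$.
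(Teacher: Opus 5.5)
Your proposal is correct and follows the paper's proof essentially step for step: specialize \cref{thm: single symmetric} to the Gaussian moments $\mu=\tfrac12(1-e^{-8\gamma^2})$ and $\nu=\tfrac12(1+e^{-8\gamma^2})$, keep only the nonnegative $J=\emptyset$ term $4\mu\,\nu^{m^\vta-1}$, and use $m^\vta\le D$ to obtain a $\Theta(1/D)$ lower bound. Your explicit inequalities replace the paper's Taylor expansions, and you add remarks on reducing $\theta$ modulo $2\pi$ and on needing $c>0$; these are small rigor improvements, not a different route.
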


\begin{remark}[Absence of barren plateaus under Gaussian initialization]
Assume that $\exists \vta \in \mathbb{F}^2_n$ such that there exists a generator $\vtsl \in S^\vta$. This condition should be easily satisfied by most IQP architectures, including all previously considered examples. Then under \cref{ass:avgcase-target} and the further assumption that for that particular $\vta$, $\sigma_\vta^2 = \Omega\left(\mathrm{poly}(n^{-1})\right)$, the model will be partial-spectrum trainable and hence free of barren plateaus in the average case, i.e., 
\begin{equation*}
\operatorname*{\mathbb{E}}_p\left[\operatorname*{Var}_{\vttheta} \left[ \frac{\partial \mathcal{L}_\vttheta}{\partial \theta_\ell} \right]\right] = \Omega\left(\mathrm{poly}(n^{-1})\right)
\end{equation*}
\end{remark}

\subsection{Classical non-simulability and trainability}\label{subsec:non-simulability}
One of the most critical research questions regarding the quantum generative model is whether its output distribution can be efficiently simulated on a classical computer. In the literature~\cite{Bremner17}, proofs of non-simulability of the output distribution family of IQP requires that the family possess the so-called anti-concentration property.
%%%%%%%%%%%%%%%%%%%%
\begin{definition}[$L^2$ anti-concentration of family of discrete probability]
\label[definition]{def:anti-concentration}
Let $Q = \{q_\vttheta\}_\vttheta$ be a family of probability distributions in $\{0,1\}^n$. Assume a probability distribution of the parameter $\theta \sim P$, which makes $Q$ a random family/measure. We say $Q$ exhibits anti-concentration if the expected $L^2$-norm $\E\left(\norm{q_\vttheta}^2_2\right)\in\mathcal{O}(2^{-n})$ , i.e., there exists a constant $c>0$ such that: $\operatorname*{\mathbb{E}}_{\vttheta \sim P}\left[\sum_{x\in\{0,1\}^n} q_\vttheta(x)^2\right] \leq c/2^n$,
% \begin{equation}\label{eq:anti-concentration}
% \end{equation}
where the quantity $\sum_x q_\vttheta(x)^2$ is known as the collision probability.  
\end{definition}
Note that, \cref{def:anti-concentration} implies an alternative formulation of anti-concentration: there exist $\alpha>0,\beta>0$ such that $\forall x\in \{0,1\}^n, \;\Pr\left[q_\vttheta(x)\geq \alpha 2^{-n}\right]\ge \beta$,
which appears in sampling hardness arguments of IQP circuits~\cite{Bremner16, Bremner17}. To see this implication, we first observe that $\E_\vttheta(\sum_xq_\vttheta(x)) = 1\implies \E_\vttheta(q_\vttheta(x)) = 2^{-n}$ and $\E_\vttheta(q_\vttheta(x)^2) \le \E_\vttheta(\sum_xq_\vttheta(x)^2) \le c2^{-n}$, then we apply the Paley-Zygmund inequality to obtain $\Pr(q_\vttheta(x) \ge \alpha2^{-n}) \ge (1-\alpha)^2 2^{-n}/c$. Setting $\beta = (1-\alpha)^2/c$ completes the proof.

To investigate the anti-concentration of IQP-QCBM, it would be convenient to connect the condition in~\cref{def:anti-concentration} to the previous theorems in~\cref{subsec:uniform}. Since the Fourier/Hadamard transform is an $L^2$ isometry (the Parseval identity), we can achieve it automatically in the next lemma.

\begin{lemma}[$L^2$ anti-concentration of IQP-QCBM]
\label[lemma]{lem:parseval} Let $Q = \{q_\vttheta\}_\vttheta$ be the random ensemble of output distributions of an IQP-QCBM with $\vttheta$ sampled u.a.r. in $[0,2\pi)^D$. Let $r^\vta = \dim\operatorname{span}_{\mathbb{F}_2}\left( S^{\vta} \right)$ defined as in~\cref{thm:uniform}. $Q$ is $L^2$ anti-concentrated iff. $\sum_{\vta\in\mathbb{F}_2^n}2^{-r^\vta} \in \mathcal{O}(1)\,.$
\end{lemma}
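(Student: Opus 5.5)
The plan is to express the expected collision probability through Parseval's identity on $\mathbb{F}_2^n$, and then evaluate each Fourier term with \cref{thm:uniform}. For any distribution $q$ on $\{0,1\}^n$ with characteristic function $C^\vta = \sum_\vtx q(\vtx)(-1)^{\vtx\cdot\vta}$, Parseval's identity for the Walsh--Hadamard transform gives
\begin{equation*}
\sum_{\vtx} q(\vtx)^2 = 2^{-n}\sum_{\vta\in\mathbb{F}_2^n} \left(C^\vta\right)^2 .
\end{equation*}
To verify this, expand the right-hand side as a double sum over $\vtx,\vtx'$ and use $\sum_\vta (-1)^{\vta\cdot(\vtx+\vtx')} = 2^n\delta_{\vtx,\vtx'}$. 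The output distribution $q_\vttheta$ of the IQP-QCBM is the one whose characteristic function is $C^\vta_\vttheta$, as in the MMD proposition, so the identity applies to it directly.

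Next I take the expectation over $\vttheta\sim\mathcal{U}[0,2\pi)^D$, which gives $\E_\vttheta\sum_\vtx q_\vttheta(\vtx)^2 = 2^{-n}\sum_\vta \E_\vttheta[(C^\vta_\vttheta)^2]$. The $\vta=\mathbf{0}$ term needs separate treatment, since \cref{thm:uniform} excludes it. Here $C^{\mathbf 0}_\vttheta = 1$, while $S^{\mathbf 0}=\emptyset$ gives $r^{\mathbf 0}=0$, so $2^{-r^{\mathbf 0}}=1$ and the term agrees with the claimed formula.

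For $\vta\neq\mathbf 0$ I write $\E[(C^\vta_\vttheta)^2] = \operatorname{Var}[C^\vta_\vttheta] + (\E C^\vta_\vttheta)^2$. The variance equals $2^{-r^\vta}$ by \cref{thm:uniform}, so it remains to show that $\E C^\vta_\vttheta = 0$ under uniform initialization. This is the only step that needs care. Conjugating $X^\vta$ by the commuting diagonal gates gives $U_\vttheta^\dagger X^\vta U_\vttheta = X^\vta \prod_{j\in S^\vta} \exp(2i\theta_j Z^{\vtsj})$, because gates whose generators commute with $X^\vta$ cancel. The expectation in $\rho$ is then
\begin{equation*}
\prod_{j\in S^\vta}\bigl(\cos 2\theta_j + i\sin 2\theta_j\, Z^{\vtsj}\bigr),
\end{equation*}
expanded and traced against $|+\rangle\!\langle+|^{\otimes n}$, which is the same expansion used in \cref{thm: single symmetric}. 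If $S^\vta\neq\emptyset$, then every term of this expansion contains, for each $j\in S^\vta$, a factor $\cos 2\theta_j$ or $\sin 2\theta_j$, and both have zero mean under the uniform distribution. By independence of the parameters, $\E C^\vta_\vttheta = 0$, consistent with the $\kappa^{2m^\vta}$ term in \cref{thm: single symmetric} with $\kappa=0$.

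If instead $S^\vta=\emptyset$, then $C^\vta_\vttheta = \langle + | X^\vta|+\rangle = 1$ for all $\vttheta$. In that case $\E[(C^\vta_\vttheta)^2] = 1 = 2^{-0} = 2^{-r^\vta}$, so the identity still holds. This case should be stated explicitly, because the formula $\operatorname{Var}=2^{-r^\vta}$ from \cref{thm:uniform} would otherwise read as $1$ where the actual variance is $0$; the second moment is $2^{-r^\vta}$ in every case regardless.

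Putting these pieces together gives the exact identity
\begin{equation*}
\E_\vttheta\Bigl[\sum_\vtx q_\vttheta(\vtx)^2\Bigr] = 2^{-n}\sum_{\vta\in\mathbb{F}_2^n} 2^{-r^\vta}.
\end{equation*}
Comparing this with \cref{def:anti-concentration}, the expected collision probability is at most $c/2^n$ for a constant $c$ exactly when $\sum_\vta 2^{-r^\vta}\in\mathcal{O}(1)$. Both directions of the equivalence follow at once.

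The main obstacle is bookkeeping rather than any new estimate. The three points to handle are the degenerate frequencies where $S^\vta$ is empty (including $\vta=\mathbf 0$), the vanishing of the mean under uniform initialization, and a consistent normalization of the Walsh transform.
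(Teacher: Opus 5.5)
Your proposal is correct and follows the paper's proof: Parseval's identity for the Walsh--Hadamard transform combined with \cref{thm:uniform} gives the exact identity $\E_\vttheta\bigl[\sum_\vtx q_\vttheta(\vtx)^2\bigr] = 2^{-n}\sum_\vta 2^{-r^\vta}$, from which both directions follow. The paper's two-line proof leaves some steps implicit, and you fill them in: you show the mean vanishes, $\E_\vttheta[C^\vta_\vttheta]=0$, and you treat separately the frequencies with $S^\vta=\emptyset$, where the variance is $0$ but the second moment is $1=2^{-r^\vta}$, so \cref{thm:uniform} as literally stated would misreport the variance.
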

%%%%%%%%%%%%%%%%%%%%
\begin{proof}
    Let $C^\vta_\vttheta = \sum_{x\in \mathbb{F}_2^n} (-1)^{x\cdot \vta}q_\vtheta(x)$ be a characteristic function value.
    By the $L^2$ isometry of the Fourier transform, we have $\E_\vttheta\left(\sum_{\vta\in \mathbb{F}_2^n}(C^\vta_\vttheta)^2\right)=2^n\E_\vttheta\left(\norm{q_\vttheta}^2_2\right)\in\mathcal{O}(1)$. By~\cref{thm:uniform}, we have $\sum_{\vta\in \mathbb{F}_2^n}\E(C^\vta_\vttheta)^2=\sum_{\vta\in\mathbb{F}_2^n}2^{-r^\vta}.$
\end{proof}
%%%%%%%%%%%%%%%%%%%%
With the above lemma, we can easily check if the examples of IQP circuits in~\cref{subsec:uniform} have the anti-concentration property or not. \\
For~\cref{eg:product}, we verify the condition in~\cref{lem:parseval}:
%%%%%%%%%%%%%%%%%%%%
\begin{equation*}
\sum_{\vta \in \mathbb{F}_2^n}  2^{-r^\vta}=\sum_{k=0}^n \binom{n}{k}2^{-k}=\left(1+\frac12\right)^n=\left(\frac32\right)^n,
\end{equation*}
which means the model's output distribution is not anti-concentrated.\\
For~\cref{eg:lattice}, we verify the condition in \cref{lem:parseval}: 
\begin{equation*}
\sum_{k=0}^n \binom{n}{k}2^{-5k}=\left(\frac{33}{32}\right)^n
\leq
\sum_{\vta \in \mathbb{F}_2^n} 2^{-r^\vta}
\leq
\left(\frac32\right)^n,
\end{equation*}
which means the model's output distribution is not anticoncentrated. \\
In~\cref{eg:sparse}, the circuit connectivity $G$ graph is sampled from the Erd\H os-R\'enyi class, which we denote by $G\sim B(n,p)$. Note that the critical rank $r^{\vta}$ is a function of $G$. To emphasize it, we shall use the notation $r^{\vta}(G)$ here. We verify the condition in \cref{lem:parseval}: for any $c>2$, where $p=c\ln(n)/n$,
%%%%%%%%%%%%%%%%
\begin{equation}\label{eq:anti-concentration-sparse}
\operatorname*{\mathbb{E}}_{G\sim B(n,p)}\left(\sum_{\vta \in \mathbb{F}_2^n}  2^{-r^\vta(G)}\right)
= 2 + o(1)\,,
\end{equation} 
which means the model's output distribution is anti-concentrated. The complete derivation can be found in \cref{app:derivation-sparse}. In \cite{Bremner17}, they show anticoncentration in the discrete sparse IQP circuit ensemble. Our derivation is complementary. 
\\
For~\cref{eg:complete}, we verify the condition in \cref{lem:parseval}: \begin{equation}
\sum_{\vta \in \mathbb{F}_2^n} 2^{-r^\vta}
= 1+(2^n-1)2^{-n} = 2 - 2^{-n},
\end{equation}
%%%%%%%%%%%%%%%%
which means the model's output distribution is anti-concentrated. 
%%%%%%%%%%%%%%%%%%%%

\begin{theorem}[Existence of non-classically simulable and trainable IQP-QCBM]
\label{thm:ac_and_pst_exist}
There exists an IQP-QCBM with fixed generators, and the parameter $\vttheta$ follows an i.i.d. uniform distribution such that the ensemble of output distributions $\{q_\vttheta\}_\vttheta$ is hard to approximate classically up to additive error. Also, there exists a non-empty subset of Fourier frequencies $\mathcal{A} \subseteq \mathbb{F}_2^n$ such that this IQP-QCBM is $\mathcal{A}$-trainable with uniform initialization of $\vttheta$.
\end{theorem}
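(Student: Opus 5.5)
The witness will be the sparse Erd\H os-R\'enyi architecture of \cref{eg:sparse}: all single-qubit generators together with the two-qubit generators $\bm{e}_u+\bm{e}_v$ for $(u,v)\in E(G)$, where $G\sim B(n,p)$ and $p=c\ln(n)/n$ with $c>2$. The proof is an existence argument. We show that a random $G$ has both required properties with probability bounded away from zero, so some fixed $G$ has both at once.

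\emph{Hardness.} By \cref{eq:anti-concentration-sparse}, $\E_G\bigl(\sum_{\vta}2^{-r^\vta(G)}\bigr)=2+o(1)$. Since every term is positive, Markov's inequality gives, for any constant $K>2$, $\Pr_G\bigl[\sum_\vta 2^{-r^\vta(G)}\le K\bigr]\ge 1-(2+o(1))/K$. For such $G$, \cref{lem:parseval} shows that the continuous-uniform ensemble $\{q_\vttheta\}$ is $L^2$ anti-concentrated. By the Paley--Zygmund remark after \cref{def:anti-concentration}, it therefore satisfies the weak anti-concentration condition used in \cite{Bremner16,Bremner17}. We then transfer the Stockmeyer-counting hardness argument of \cite{Bremner17}. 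Its two ingredients are anti-concentration and $\#$P-hardness of approximating $q_\vttheta(x)$ up to multiplicative error on average. The first we now have. For the second, \cite{Bremner17} uses discrete angles, but their angle set is a subset of $[0,2\pi)$. Worst-case $\#$P-hardness of multiplicative approximation already holds for the discrete-angle sparse instances, using postselection universality of sparse IQP with single-qubit gates in the spirit of \cite{Fujii17}. The conclusion is that an efficient classical additive-error sampler for a constant fraction of $\vttheta$ would collapse the polynomial hierarchy, conditioned on the standard average-case conjecture.

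\emph{Trainability.} Take $\mathcal{A}=\{\bm{e}_k\}$ for a single fixed qubit $k$, or more generally weight-$O(1)$ frequencies. For $\vta=\bm{e}_k$, $S^\vta$ contains $\bm{e}_k$ itself, so some $\theta_\ell$ has nonzero derivative. Moreover $r^\vta=|N[\{k\}]|=1+\deg_G(k)$. The degree satisfies $\deg_G(k)\sim\mathrm{Binomial}(n-1,p)$ with mean $c\ln n$, and a Chernoff bound gives $\deg_G(k)\le 2e\,c\ln n$ except with probability $n^{-\Omega(1)}$. A union bound over all $k$ requires $n\cdot n^{-\Omega(c)}=o(1)$, which holds for $c>2$ with suitable constants. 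On that event, \cref{thm:uniform} gives $\mathrm{Var}_\vttheta[\partial C^\vta_\vttheta/\partial\theta_\ell]=2^{2-r^\vta}\ge 4\cdot 2^{-1-2ec\ln n}=\Omega(n^{-O(1)})$. This is exactly $\mathcal{A}$-trainability in the sense of \cref{def:pst}.

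\emph{Combining.} The anti-concentration event has probability at least $1-(2+o(1))/K$, and the degree event has probability $1-o(1)$. Their intersection is nonempty for large $n$, so fixing any $G$ in it yields the claimed IQP-QCBM. The main obstacle is the hardness transfer from the discrete-angle ensemble of \cite{Bremner17} to i.i.d.\ continuous uniform angles. The paper's own contribution here is only anti-concentration via \cref{lem:parseval}. The average-case $\#$P-hardness step must either be invoked as the same conjecture used in \cite{Bremner17}, adapted to the continuous ensemble, or be justified by noting that $q_\vttheta(x)$ is a low-degree trigonometric polynomial in $\vttheta$, which enables a worst-to-average-case reduction. I would first try simply stating the result conditioned on that conjecture, as the hardness literature does.
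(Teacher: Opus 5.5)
Your proposal is correct and follows the same route as the paper. Both use the sparse Erd\H{o}s--R\'enyi architecture with $c>2$, apply Markov's inequality to $\E_G\bigl[\sum_{\vta}2^{-r^\vta(G)}\bigr]=2+o(1)$ to fix an anti-concentrated instance, get hardness from the anti-concentration/Stockmeyer argument of \cite{Bremner17} conditioned on its average-case conjecture, and get trainability from low-weight frequencies. Your version is tighter on two points the paper glosses over. First, you restrict $\mathcal{A}$ to weight-$O(1)$ frequencies and justify this with a max-degree Chernoff/union bound; the paper's choice $\mathcal{A}=\{\vta : |\vta|\in\mathcal{O}(\log n)\}$ contains frequencies with $r^\vta=\Theta(\log^2 n)$, whose derivative variance is only quasi-polynomially small, so it violates \cref{def:pst}. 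Second, you explicitly intersect the anti-concentration and degree events to obtain a single fixed graph $G$.
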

%%%%%%%%
\begin{proof}
    We consider the circuit connectivity graph $G$ generated by the sparse Erd\H os-R\'enyi model as in~\cref{eg:sparse}, i.e., for each pair $(i, j)$ of distinct qubits, we include a gate $\exp({i\theta Z_i\otimes Z_j})$ across those qubits with probability $c\log n/n$, for some fixed $c>2$. We also include all single-qubit Pauli-$Z$ rotations in the circuit. For this ensemble of IQP circuits, we have proven in~\cref{eq:anti-concentration-sparse} that its output distribution is anti-concentrated. Now, applying Markov inequality with~\cref{eq:anti-concentration-sparse}, we have the $\Pr\left(\sum_{\vta \in \mathbb{F}_2^n}  2^{-r^\vta(G)} \le \alpha\right) \ge 1 - 2/\alpha$, for a constant $\alpha >0$ (not a function of $n$), which means that if we choose a circuit from this random ensemble and keep the randomness of its parameter $\vttheta$ (uniform), we have a constant probability to obtain a single, parameterized IQP circuit with anti-concentration property.  \\
    \emph{Non-classically simulable}: Now, we apply Corollary 7 in~\cite{Bremner17}, saying that if $\{q_\vttheta\}_\vttheta$ is anti-concentrated (proven as above) and we assume there exists a classical polynomial-time algorithm $\mathcal{F}$ that, for each $U_\vtheta$, outputs a distribution approximately close to $q_\vttheta$ w.r.t. $\ell_1$-norm, then there is an $\text{FBPP}^{\text{NP}}$ algorithm (with access to $\mathcal{F}$) that approximates $q_\vttheta(x)$ for each $x\in\mathbb{F}_2^n$ up to a multiplicative error. As conjectured in~\cite{Bremner17}, approximating $q_\vttheta(x)$ with multiplicative error is \#P-hard, which leads to the collapse of the polynomial hierarchy if we assume such an algorithm $\mathcal{A}$ exists and the conjecture holds.\\
    \emph{Trainable}: as we showed in~\cref{eg:sparse}, for a fixed IQP connectivity graph sampled from Erd\H{o}s-R\'enyi, for Fourier frequency $|\vta|\in \Theta(1)$, $\operatorname{Var}(\partial C_\vttheta^\vta) \in \mathcal{O}(\mathrm{poly}(n^{-1}))$; for $|\vta|\in \mathcal{O}(\log(n))$, the variance decays subexponentially (but not polynomially); and for $|\vta|\in \Theta(n)$, the variance decays exponentially. Hence, it suffices to allocate the majority of the spectral distribution to the low-Hamming-weight frequencies, i.e., $\mathcal{A}=\{\vta\in\mathbb{F}_2^n \;\vert\; |\vta|\in \mathcal{O}(\log(n))\}$, to prevent the MMD loss gradient from exponential vanishing.
\end{proof}

\section{Discussion}
\label{sec:discussion}
Related to the topic of classical non-simulability and trainability, \citet{herbst25} claim that ``quantum models with sampling hardness rooted in anticoncentration will not be trainable''. We remark that their finding is not in contradiction with our results. The reason is that their argument relies on the concept of pseudo-independence, which is a strictly stronger condition than anti-concentration. A random measure $p$ is pseudo-independent if $p(x)=Y_x / (\sum_{j=1}^{2^n} Y_j)$ where $Y_j$ are identical and independent positive random variables. In fact, pseudo-independence implies anticoncentration, but the converse is false. They claim that output distributions of IQP circuits resemble pseudo-independent distributions (which is too strong as an assumption), and they prove that for a family of pseudo-independent distributions, MMD loss value vanishes exponentially w.r.t. the qubit number $n$ and hence conclude non-trainability. Their proof relies on the fact that pseudo-independence implies that the characteristic function values for all $\vta \neq \bm{0}$ are exponentially vanishing. In contrast, $L^2$ anti-concentration alone only constrains the collision probability, but still allows a small set of frequencies to remain polynomially visible. Our \cref{eg:sparse} exactly illustrates this subtle difference. We remark again that it is the anti-concentration needed in the proof of classical sampling hardness~\cite{Bremner16, Bremner17}, not pseudo-independence. 

Finally, we emphasize that training IQP-QCBM by minimizing $\MMD_k^2(p, q_\vttheta)$ for any choice of kernel $k$ does not imply that we can get a small total variation distance $\operatorname{TV}(p, q_\vttheta)$ since it is widely known that MMD is controlled by TV, but not vice versa. This limitation is purely classical information-theoretical, and not specific to IQP. It reflects the general fact that learning an arbitrary distribution on $\{0,1\}^n$ in total variation requires exponentially many samples in the worst case, since the space has $2^n-1$ degrees of freedom.

\section{Conclusions}
\label{sec:conclusion}
We study the trainability of IQP-QCBMs - a promising quantum generative model family, which is believed to have potential quantum advantage. In particular, we focus on investigating the scaling of the characteristic function of the output distribution of IQP-QCBM and of the MMD loss function.

We first derive the closed-form variance expressions for the characteristic function values and the MMD and then identified the critical rank of the IQP model - the dimension of the space spanned (in $\mathbb{F}_2^n$) by the support of anti-commuting generators w.r.t. the observable, which determines the scaling of the loss gradient, and hence indicates the (non-)presence of BP. We precisely analyze whether barren plateaus occur in the MMD loss of four different IQP architectures under uniform initialization of the circuit's parameters, and demonstrate how BP can be mitigated by appropriate kernel selection.

We also show that trainability is not automatically incompatible with the classical non-simulability of IQP models. We show that for an IQP circuit whose connectivity is generated by a sparse Erd\H os-R\'enyi graph with $\mathcal{O}(n\log n)$ edges, its output distribution is hard to sample classically by applying the arguments from~\cite{Bremner17}. Also, this type of circuit is trainable: by allocating most of the spectral density to the low-Hamming-weight frequencies, we can avoid exponentially vanishing gradients of the MMD loss.

Lastly, we show that Gaussian initialization of the circuit's parameters leads to a polynomial decay in the variances of characteristic function values, even for high-weight frequencies, providing a potential way to broaden the set of trainable architectures in practice.

\paragraph{Acknowledgements} VD and HW were supported by the Dutch National Growth Fund (NGF), as part of the Quantum Delta NL programme.
This work was also supported by the European Union’s Horizon Europe program through the ERC CoG BeMAIQuantum (Grant No. 101124342). This research was funded by the BMW Group. 
\bibliography{ref}

%%%%%%%%%%%%%%%%%%%%%%%%%%%%%%%%%%%%%%%%%%%%%%%%%%%%%%%%%%%%%%%%%%%%%%%%%%%%%%%
%%%%%%%%%%%%%%%%%%%%%%%%%%%%%%%%%%%%%%%%%%%%%%%%%%%%%%%%%%%%%%%%%%%%%%%%%%%%%%%
% APPENDIX
%%%%%%%%%%%%%%%%%%%%%%%%%%%%%%%%%%%%%%%%%%%%%%%%%%%%%%%%%%%%%%%%%%%%%%%%%%%%%%%
%%%%%%%%%%%%%%%%%%%%%%%%%%%%%%%%%%%%%%%%%%%%%%%%%%%%%%%%%%%%%%%%%%%%%%%%%%%%%%%
\newpage
\appendix
\onecolumn
\section{Proof of Theorems}

\subsection{Proof of \cref{thm: single symmetric}}
Consider an $n$-qubit IQP-QCBM $U_\vttheta$ with generators $\{\vtsj\}$. Assume that the initialization satisfies \cref{ass:symm-iid-init}. For a fixed Fourier frequency $\vta \in \mathbb{F}_2^n \setminus \{\mathbf{0}\}$, define $S^{\vta} := \{ \vtsj \big| \vta \cdot \vtsj = 1 \}$, $
m^\vta := \abs{S^{\vta}}$. Let $
\Xi^\vta:= \{J\subseteq S^\vta:\ \sum_{j\in J}\vtsj=\mathbf 0 \}.$ Then, 
%%%%%%%%%
\begin{equation}
     \operatorname*{Var}_\vttheta \left[ C^\vta_\vttheta \right] = 
\left(\sum_{J\in \Xi^\vta} \mu^{|J|} \nu^{m^\vta-|J|}\right) - \kappa^{2m^\vta},
\end{equation}
%%%%%%%%%
where $\mu=\operatorname*{\mathbb{E}}_{\vttheta}[\sin^2(2\theta)]$, $\nu=\operatorname*{\mathbb{E}}_{\vttheta}[\cos^2(2\theta)]$, $\kappa=\operatorname*{\mathbb{E}}_{\vttheta}[\cos(2\theta)]$. Fix $1\leq \ell \leq D$. If $\vtsl \notin S^\vta$, then $ \frac{\partial C^\vta_\vttheta}{\partial \theta_\ell}=0$. If $\vtsl \in S^\vta$, then, 
%%%%%%%%%
\begin{align}
\operatorname*{Var}_\vttheta \left[ \frac{\partial C^\vta_\vttheta}{\partial \theta_\ell} \right]
&= 4\sum_{J\in \Xi^\vta}
\Bigl[
\bm{1}\{\ell\in J\}\,\mu^{|J|-1}\nu^{m^\vta-|J|+1} \notag
\\&+
\bm{1}\{\ell\notin J\}\,\mu^{|J|+1}\,\nu^{m^\vta-|J|-1}\Bigr].
\label{eq: symmetric variance-decomposition}
\end{align}
%%%%%%%%%

\begin{proof}
Fix $\vta\neq \mathbf 0$ and abbreviate $S:=S^\vta$, $m:=m^\vta$. Start from the expression $C^\vta_\vttheta=\bra{+} U_\vttheta^\dagger X^{\vta}U_\vttheta\ket{+}
$. By recursively swapping the gates in $U_\vttheta$ with the observable $X^{\vta}$ and applying the commutation or anticommutation rule, one can derive the following:
\begin{align}
C^\vta_\vttheta
&= \bra{+}^{\otimes n} \prod_{\vtsj\in S^\vta} \exp(-2\theta_jZ^\vtsj) \ket{+}^{\otimes n} \\
&= \bra{+}^{\otimes n} \prod_{\vtsj\in S^\vta} (\cos(2\theta_j)I - i \sin(2\theta_j)Z^\vtsj) \ket{+}^{\otimes n}  \\
&= \sum_{J\in\Xi^\vta} R_J(\vttheta), \qquad R_J(\vttheta)=\eta_J 
\Bigl(\prod_{j\in J} \sin(2\theta_j)\Bigr)\Bigl(\prod_{j\in S\setminus J}\cos(2\theta_j)\Bigr),
\label{eq:C_subset_sum}
\end{align}
where $\eta_J \in \{-1,1\}$. The equation is a reformulation of equation 27 of \cite{recioarmengol25}. Note that $\forall J \in \Xi^\vta, \abs{J} = 0 \, (\text{mod 2})$, so no imaginary number $i$ appears. Also note that $\sin(\cdot)$ is an odd function, whose expectation vanishes for any initialization under \cref{ass:symm-iid-init}. We will use this property many times.

First, we derive the variance of the characteristic function values, $\operatorname*{Var}_\vttheta \left[ C^\vta_\vttheta \right] = \operatorname*{\mathbb{E}}_\vttheta \left[ (C^\vta_\vttheta)^2\right] - \left(\operatorname*{\mathbb{E}}_\vttheta \left[ C^\vta_\vttheta \right]\right)^2$. Regarding $\operatorname*{\mathbb{E}}_\vttheta \left[ C^\vta_\vttheta \right]$, except for the empty set $J = \emptyset$, all the other terms in the summation contain $\sin(2\theta_j)$ for some $j$ and hence vanish in expectation. Regarding $\operatorname*{\mathbb{E}}_\vttheta \left[ (C^\vta_\vttheta)^2\right]$, for the same reason, $\forall J \neq K$,  $\operatorname*{\mathbb{E}}_{\vttheta}[R_J(\vttheta)R_K(\vttheta)] = 0.$ Putting the remaining terms together gives the formula:
\begin{equation}
\label{eq: symmetric cost variance}
     \operatorname*{Var}_\vttheta \left[ C^\vta_\vttheta \right] = \left(\sum_{J \in \Xi^\vta} \operatorname*{\mathbb{E}}_{\vttheta}[\sin^2(2\theta)]^\abs{J} \operatorname*{\mathbb{E}}_{\vttheta}[\cos^2(2\theta)]^{m^\vta-\abs{J}}\right) - \operatorname*{\mathbb{E}}_{\vttheta}[\cos(2\theta)]^{2m^\vta}
\end{equation}
Then, we derive the variance of the partial derivatives. We differentiate \cref{eq:C_subset_sum} with respect to $\theta_\ell$. If $\vta \notin S^\vta$, then $\comm{Z^{\vts^\ell}}{X^\vta}=0$, so $C^\vta_\vttheta$ is independent of $\theta_\ell$, giving $\frac{\partial C^\vta_\vttheta}{\partial \theta_\ell}=0$. Now, if $\vtsl \in S^\vta$,

\begin{equation}
\frac{\partial C^\vta_\vttheta}{\partial\theta_\ell}
=
\sum_{J\in\Xi^\vta} T_J(\vttheta),
\end{equation}
where
\begin{equation}
T_J(\vttheta)
=
\begin{cases}
2 \eta_J\prod_{j\in J\setminus\{\ell\}}\sin(2\theta_j)\prod_{j\in (S\setminus J) \cup \{\ell\}}\cos(2\theta_j), & \ell\in J,\\
-2 \eta_J \prod_{j\in J\cup \{\ell \}}\sin(2\theta_j)\prod_{j\in (S\setminus J)\setminus\{\ell\}}\cos(2\theta_j), & \ell\notin J.
\end{cases}
\label{eq:TJ_def}
\end{equation}
For the same reason as before, $\operatorname*{\mathbb{E}}_{\vttheta}[T_J(\vttheta)T_K(\vttheta)] = 0 $ and  $\operatorname*{\mathbb{E}}_{\vttheta} \left[ \frac{\partial C^\vta_\vttheta}{\partial\theta_\ell}\right] = 0.$
Therefore,
\begin{equation}
\operatorname*{Var}_{\vttheta}\left[\frac{\partial C^\vta_\vttheta}{\partial\theta_\ell}\right]
=
\operatorname*{\mathbb{E}}_{\vttheta}\left[\left(\frac{\partial C^\vta_\vttheta}{\partial\theta_\ell}\right)^2\right]
=
\sum_{J\in \Xi^\vta}\operatorname*{\mathbb{E}}_{\vttheta}[T_J(\vttheta)^2],
\label{eq: symmetric derivative variance}
\end{equation}
Finally, by plugging in \cref{eq:TJ_def} gives the desired result.
\end{proof}

\subsection{Proof of \cref{prop:lower}}
Consider an $n$-qubit IQP-QCBM $U_\vttheta$ with generators $\{\vtsj\}$. Assume that the initialization satisfies \cref{ass:symm-iid-init}. Consider an arbitrary probability mass function of Fourier frequencies $\Lambda(\vta)$. Then, the MMD partial derivative yields the following decomposition.
\begin{equation}
\operatorname*{Var}_{\vttheta} \left[ \frac{\partial \mathcal{L}_\vttheta}{\partial \theta_\ell} \right] = 4 \operatorname*{\mathbb{E}}_{\vttheta} \biggl[ \sum_{\vta, \vtb} \Lambda(\vta) \Lambda(\vtb) \biggl( \nonumber - 2C^\vta_p C_\vttheta^\vtb \frac{\partial C_\vttheta^\vta}{\partial \theta_\ell} \frac{\partial C_\vttheta^\vtb}{\partial \theta_\ell} + 
 C^\vta_p C^\vtb_p \frac{\partial C_\vttheta^\vta}{\partial \theta_\ell} \frac{\partial C_\vttheta^\vtb}{\partial \theta_\ell} + C_\vttheta^\vta C_\vttheta^\vtb \frac{\partial C_\vttheta^\vta}{\partial \theta_\ell} \frac{\partial C_\vttheta^\vtb}{\partial \theta_\ell} \biggr) \biggr]
\end{equation}
Consider average-case target distributions drawn from a problem ensemble $\mathcal{P}$ that satisfies \cref{ass:avgcase-target}. Then, the quantity can be lower bounded as:
\begin{equation}
\operatorname*{\mathbb{E}}_{p\sim\mathcal{P}} \left[\operatorname*{Var}_{\vttheta} \left[ \frac{\partial \mathcal{L}_\vttheta}{\partial \theta_\ell} \right]\right] \geq
4 \sum_\vta G^2(\vta)\sigma_\vta^2\operatorname*{Var}_{\vttheta} \left[ \frac{\partial C^\vta_\vttheta}{\partial \theta_\ell} \right]
\end{equation}
where for a fixed $\vta$, $\sigma_\vta^2 =\operatorname*{\mathbb {E}}_{p}\left[(C_p^{\vta})^2\right]$ is the characteristic function value variance in the target distribution ensemble.

\begin{proof}
First, we expand the term inside the expectation $\operatorname*{\mathbb{E}}_p$ on L.H.S..
\begin{align}
    \operatorname*{Var}_{\vttheta} \left[ \frac{\partial \mathcal{L}_\vttheta}{\partial \theta_\ell} \right] &=    \operatorname*{Var}_{\vttheta} \left[ \frac{\partial }{\partial \theta_\ell}  \left(\sum_{\vta \in \mathbb{F}_2^n} \Lambda(\vta)\left(C^\vta_p - C^\vta_\vttheta \right)^2\right)\right]
\\ &= \operatorname*{Var}_{\vttheta} \left[-2 \sum_\vta \Lambda(\vta) \left(C^\vta_p - C^\vta_\vttheta \right) \frac{\partial C^\vta_\vttheta}{\partial \theta_\ell} \right] \\
&= 4 \sum_{\vta, \vtb} \Lambda(\vta) \Lambda(\vtb) \operatorname*{Cov}_{\vttheta}\left(\left(C^\vta_p - C^\vta_\vttheta\right) \frac{\partial C^\vta_\vttheta}{\partial \theta_\ell} , \left(C^\vtb_p - C^\vtb_\vttheta\right) \frac{\partial C^\vtb_\vttheta}{\partial \theta_\ell} \right) \\
&= 4 \sum_{\vta, \vtb} \Lambda(\vta) \Lambda(\vtb) \left(\operatorname*{\mathbb{E}}_{\vttheta} \left[ (C^\vta_p - C^\vta_\vttheta) \frac{\partial C^\vta_\vttheta}{\partial \theta_\ell} (C^\vtb_p - C^\vtb_\vttheta) \frac{\partial C^\vtb_\vttheta}{\partial \theta_\ell} \right] - \operatorname*{\mathbb{E}}_{\vttheta}\left[(C^\vta_p - C^\vta_\vttheta) \frac{\partial C^\vta_\vttheta}{\partial \theta_\ell} \right] \operatorname*{\mathbb{E}}_{\vttheta}\left[(C^\vtb_p - C^\vtb_\vttheta) \frac{\partial C^\vtb_\vttheta}{\partial \theta_\ell} \right] \right) \label{eq: one and two-copy vanish} \\
&= 4 \operatorname*{\mathbb{E}}_{\vttheta} \left[\sum_{\vta, \vtb} \Lambda(\vta) \Lambda(\vtb)  \left( C^\vta_p - C^\vta_\vttheta \right) \frac{\partial C^\vta_\vttheta}{\partial \theta_\ell} \left(C^\vtb_p - C^\vtb_\vttheta\right) \frac{\partial C^\vtb_\vttheta}{\partial \theta_\ell} \right] \\
&= 4 \operatorname*{\mathbb{E}}_{\vttheta} \left[ \sum_{\vta, \vtb} \Lambda(\vta) \Lambda(\vtb) \left( C^\vta_p C^\vtb_p \frac{\partial C_\vttheta^\vta}{\partial \theta_\ell} \frac{\partial C_\vttheta^\vtb}{\partial \theta_\ell} - C^\vta_p C_\vttheta^\vtb \frac{\partial C_\vttheta^\vta}{\partial \theta_\ell} \frac{\partial C_\vttheta^\vtb}{\partial \theta_\ell} - C^\vtb_p C_\vttheta^\vta \frac{\partial C_\vttheta^\vta}{\partial \theta_\ell} \frac{\partial C_\vttheta^\vtb}{\partial \theta_\ell} + C_\vttheta^\vta C_\vttheta^\vtb \frac{\partial C_\vttheta^\vta}{\partial \theta_\ell} \frac{\partial C_\vttheta^\vtb}{\partial \theta_\ell} \right)\right] \label{eq: two three four copies}
\end{align}
In the above derivation, many terms vanish. In specific, in \cref{eq: one and two-copy vanish}, all terms of the forms $\frac{\partial C_\vttheta^\vta}{\partial \theta_\ell} $ and $C^\vta_\vttheta\frac{\partial C_\vttheta^\vta}{\partial \theta_\ell}$ have zero expectation values. The reason is, these terms either vanish when $\vta \cdot\vtsl =0$, or are odd functions in $\theta_\ell$ when $\vta \cdot\vtsl =1$, the expectations of which vanish by the symmetry in the $\vttheta$ distribution.

To continue from \cref{eq: two three four copies}, we first identify the four-copy terms as square terms, i.e., 
\begin{equation}
    \operatorname*{\mathbb{E}}_{\vttheta} \left[ \sum_{\vta, \vtb} \Lambda(\vta) \Lambda(\vtb) C_\vttheta^\vta C_\vttheta^\vtb \frac{\partial C_\vttheta^\vta}{\partial \theta_\ell} \frac{\partial C_\vttheta^\vtb}{\partial \theta_\ell}\right] =\operatorname*{\mathbb{E}}_{\vttheta}\left[ \left(\sum_{\vta \in \mathbb{F}_2^n} \Lambda(\vta) C_\vttheta^\vta \frac{\partial C_\vttheta^\vta}{\partial \theta_\ell}\right)^2\right]
\end{equation}
which we can safely drop in the lower bound as non-negative terms. 

Then we can apply the ensemble average. Due to the mean-zero assumption on the characteristic function values, all the three-copy terms $C^\vta_p C_\vttheta^\vtb \frac{\partial C_\vttheta^\vta}{\partial \theta_\ell} \frac{\partial C_\vttheta^\vtb}{\partial \theta_\ell}$ vanish. Due to the uncorrelatedness assumptions on the characteristic function values, all the off-diagonal two copy terms $C^\vta_p C^\vtb_p \frac{\partial C_\vttheta^\vta}{\partial \theta_\ell} \frac{\partial C_\vttheta^\vtb}{\partial \theta_\ell}$ also vanish. Finally, we obtain the desired non-trivial lower bound for an average-case problem.
\begin{align}
 \operatorname*{\mathbb{E}}_p\left[\operatorname*{Var}_{\vttheta} \left[ \frac{\partial \mathcal{L}_\vttheta}{\partial \theta_\ell} \right]\right] &\geq 4 \operatorname*{\mathbb{E}}_{\vttheta}  \left[ \sum_{\vta \in \mathbb{F}_2^n} \Lambda(\vta)^2 (C^\vta_p)^2 \left(\frac{\partial C_\vttheta^\vta}{\partial \theta_\ell}\right)^2 \right] \\
 &= 4 \sum_a\Lambda(\vta)^2\sigma_\vta^2\operatorname*{Var}_{\vttheta} \left[ \frac{\partial C^\vta_\vttheta}{\partial \theta_\ell} \right] 
\end{align}
\end{proof}

\subsection{Proof of \cref{prop:upper}}
Consider an $n$-qubit IQP-QCBM $U_\vttheta$ with generators $\{\vtsj\}$. Consider an arbitrary initialization scheme. Consider an arbitrary probability mass function of Fourier frequencies $\Lambda(\vta)$. The MMD partial derivative variance can be upper bounded as:
\begin{equation}
    \operatorname*{Var}_{\vttheta} \left[ \frac{\partial \mathcal{L}_\vttheta}{\partial \theta_\ell} \right] \leq 16 \sum_\vta \Lambda(\vta)\left[  \operatorname*{Var}_{\vttheta} \left[ \frac{\partial C^\vta_\vttheta}{\partial \theta_\ell} \right]\right]
\end{equation}
\begin{proof}
\begin{align}
\operatorname*{Var}_{\vttheta} \left[ \frac{\partial \mathcal{L}_\vttheta}{\partial \theta_\ell} \right] &\leq \operatorname*{\mathbb{E}}_{\vttheta} \left[ \left(\frac{\partial \mathcal{L}_\vttheta}{\partial \theta_\ell} \right)^2 \right] \\
&= 4\operatorname*{\mathbb{E}}_{\vttheta} \left[  \operatorname*{\mathbb{E}}_{\vta}\left[\left( C^\vta_p - C^\vta_\vttheta \right) \frac{\partial C^\vta_\vttheta}{\partial \theta_\ell} \right]^2  \right] \label{eq: upper line 2}\\
&\leq 4\operatorname*{\mathbb{E}}_{\vttheta} \left[  \operatorname*{\mathbb{E}}_{\vta}\left[\left( C^\vta_p - C^\vta_\vttheta \right)^2 \right] \operatorname*{\mathbb{E}}_{\vta}\left[\left(\frac{\partial C^\vta_\vttheta}{\partial \theta_\ell} \right)^2\right]\right] \label{eq: upper line 3}\\
&\leq 4\operatorname*{\mathbb{E}}_{\vttheta} \left[  4 \operatorname*{\mathbb{E}}_{\vta}\left[\left(\frac{\partial C^\vta_\vttheta}{\partial \theta_\ell} \right)^2\right]\right] \label{eq: upper line 4}\\
&= 16 \sum_\vta \Lambda(\vta)\left[  \operatorname*{Var}_{\vttheta} \left[ \frac{\partial C^\vta_\vttheta}{\partial \theta_\ell} \right]\right] \label{eq: upper line 5}
\end{align}
From \cref{eq: upper line 2} to \cref{eq: upper line 3}, we use the Cauchy-Schwarz inequality. From \cref{eq: upper line 3} to \cref{eq: upper line 4}, we use the bound that for a fixed $\vta$, $(C^\vta_p - C^\vta_\vttheta)^2 \leq 4$. From \cref{eq: upper line 4} to \cref{eq: upper line 5}, we exchange the order of expectations over $\vta$ and $\vttheta$, and use the fact that $\operatorname*{\mathbb{E}}_{\vttheta}\left[\partial C^\vta_\vttheta / \partial \theta_\ell\right] = 0$.
\end{proof}

\subsection{Proof of \cref{thm:uniform}}
Consider an $n$-qubit IQP-QCBM $U_\vttheta$ with generators $\{\vtsj\}$. Consider that the parameters are initialized from i.i.d. uniform distribution $\theta_j \sim\mathcal{U}[0,2\pi)$. For a fixed Fourier frequency $\vta \in \mathbb{F}_2^n \setminus \{\mathbf{0}\}$, define $S^{\vta} := \{ \vtsj \big| \vta \cdot \vtsj = 1 \}$ and define the critical rank $r^\vta := \dim \operatorname{span}_{\mathbb{F}_2}\left( S^{\vta} \right)$, which is the number of $\mathbb{F}_2$-linearly independent generators. Then, \begin{equation}
\operatorname*{Var}_{\vttheta} \left[ C^\vta_\vttheta\right] = 2^{-r^\vta}. 
\end{equation} Fix $1\leq \ell \leq D$. If $\vtsl \notin S^\vta$, then $ \frac{\partial C^\vta_\vttheta}{\partial \theta_\ell}=0$. If $\vtsl \in S^\vta$, then
\begin{equation}
\operatorname*{Var}_{\vttheta} \left[ \frac{\partial C^\vta_\vttheta}{\partial \theta_\ell} \right] = 2^{\,2-r^\vta}. \end{equation}
\begin{proof}
Since the i.i.d. uniform initialization satisfies \cref{ass:symm-iid-init}, \cref{eq: symmetric cost variance} and \cref{eq: symmetric derivative variance} from \cref{thm: single symmetric} apply here, and we adopt the same notations in this proof. In the case of uniform distribution, $\mu = \nu = 1/2$ and $\kappa = 0$. Plugging in these values gives, \begin{equation}
    \operatorname*{Var}_\vttheta \left[ C^\vta_\vttheta \right] = 2^{-m^\vta} \abs{\Xi^\vta} , \qquad \operatorname*{Var}_{\vttheta}\left[\frac{\partial C^\vta_\vttheta}{\partial\theta_\ell}\right]
= 2^{2-m^\vta} \abs{\Xi^\vta} 
\end{equation}
Note that $\Xi^\vta$ is isomorphic to the set $\{\vtx \in \mathbb{F}_2^n | A \vtx = \bm{0}\}$, where $A$ is the matrix whose columns are the elements of $S^\vta$. In other words, $\Xi^\vta$ is the kernel of $A$ and hence $\abs{\Xi^\vta} = 2^{m^\vta-r^\vta}$ by rank-nullity theorem. This completes the proof.
\end{proof}

\subsection{Proof of \cref{thm:poly_lb_gaussian_mmd}}
Consider an $n$-qubit IQP-QCBM $U_\vttheta$ with generators $\{\vtsj\}$ of arbitrary architecture. Consider that the parameters $\vttheta$ are i.i.d. and Gaussian $\theta_j \sim\mathcal{N}(0,\gamma^2)$, with variance $\gamma^2 = c/D$ for arbitrary constant $c$. Fix $1\leq \ell \leq D$. For any frequency $\vta \in \mathbb{F}_2^n$, if $\in S^\vta$, then 
\begin{equation}
\operatorname*{Var}_{\vttheta} \left[ \frac{\partial C^\vta_\vttheta}{\partial \theta_\ell} \right] = \Omega\left(\mathrm{poly}(n^{-1})\right).
\end{equation} 
If $\vtsl \notin S^\vta$, then $\frac{\partial C^\vta_\vttheta}{\partial \theta_\ell}=0$.
\begin{proof}
We assume that $\vta \in \vtsl$ and continue from \cref{eq: symmetric variance} of \cref{thm: single symmetric},
\begin{equation}
\operatorname*{Var}_{\vttheta}\!\left(\frac{\partial C_\vttheta^\vta}{\partial\theta_\ell}\right)
=
4\sum_{J\in \Omega^\vta}
\left[
\bm{1}\{\ell\in J\}\,\mu^{|J|-1}\nu^{m^\vta-|J|+1}
+
\bm{1}\{\ell\notin J\}\,\mu^{|J|+1}\,\nu^{m^\vta-|J|-1}
\right],
\label{eq:single_var_exact_c}
\end{equation}
We identify \begin{equation}
    \mu:=\operatorname*{\mathbb{E}}_{\theta\sim\mathcal N(0,\gamma^2)}[\sin^2(2\theta)]
=\frac{1-e^{-8\gamma^2}}{2},
\qquad
\nu:=\operatorname*{\mathbb{E}}_{\theta\sim\mathcal N(0,\gamma^2)}[\cos^2(2\theta)]
=\frac{1+e^{-8\gamma^2}}{2}
\end{equation} 
Note that every term in the summation is nonnegative. For the proof it suffices to keep the term $J= \emptyset$ and drop other terms, which gives
\begin{align}
\operatorname*{Var}_{\vttheta}\!\left(\frac{\partial C_\vttheta^\vta}{\partial\theta_\ell}\right) & \geq 
4\mu\,\nu^{m^\vta-1} \\
& \geq  4\mu\,\nu^{D-1} 
\label{eq:single_var_lb_c}
\end{align}
By Taylor expansion, we can derive that $\mu = 4c/D + o(1/D)$ and $\nu^{D-1} = e^{-4c} + o(1)$. Therefore, $4\mu \nu^{D-1} =\Theta(1/D)$, which completes the proof.
\end{proof}

\newpage 

\section{Derivation details}
\subsection{Complete derivation of anti-concentration in the sparse Erdos-Renyi graph (\cref{eg:sparse})}
\label{app:derivation-sparse}
Based on \cref{eq:sparse ra}, we 
get that 
\begin{equation}
\operatorname*{\mathbb{E}}_G\big[2^{-r^\vta}\big]
=
2^{-\abs{\vta}}\mathbb{E}\big[2^{-|B|}\big]
=
2^{-\abs{\vta}}\left(1-\frac{q}{2}\right)^{n-\abs{\vta}},
\end{equation} where the last equality is obtained by the formula of generating function of the binomial random variable $\abs{B}$. Then,
\begin{equation}
\operatorname*{\mathbb{E}}_G\left[\sum_{\vta \in \mathbb{F}_2^n}  2^{-r^\vta}\right]
=
\sum_{k=0}^n \omega_k\left(1+\lambda^k\right)^{n-k}, \qquad \omega_k=\binom{n}{k}2^{-n}, \qquad \lambda=1-p.
\end{equation}
We can rewrite this as three parts:
\begin{align}
\operatorname*{\mathbb{E}}_G\left[\sum_{\vta \in \mathbb{F}_2^n}  2^{-r^\vta}\right] &= \omega_0 \cdot 2^n + \sum_{k=1}^n \omega_k + \sum_{k=1}^n \omega_k \left((1+\lambda^k)^{n-k} - 1 \right) \\
&=1+(1-2^{-n})+R_n
\end{align}
where $R_n:=\sum_{k=1}^n w_k\Bigl[(1+\lambda^k)^{n-k}-1\Bigr].$
We split $R_n=R_{1,n}+R_{\ge2,n}$ with
\begin{equation}
R_{1,n}=w_1\bigl[(1+\lambda)^{n-1}-1\bigr]
= n2^{-n}\bigl[(2-p)^{n-1}-1\bigr].
\end{equation}

Define the shorthand
\begin{equation}
T_n:=n2^{-n}(2-p)^{n-1}=\frac{n}{2-p}\Bigl(1-\frac{p}{2}\Bigr)^n = \frac{n}{2-p} \exp\left(-\frac{np}{2} + \mathcal{O}(np^2)\right). \label{eq:Tn}
\end{equation}
With $p=c \ln(n)/n$, we have \begin{equation}
T_n = (1/2 + o(1))n^{1-c/2}
\end{equation}

\noindent
Moreover,
\begin{equation}\label{eq:R1-asymp}
R_{1,n}=T_n+o(T_n).
\end{equation}

Now it is left to deal with $R_{\ge2,n}$. For each $k$, we can separate the term inside as a linear part and a remainder part. \begin{equation}
(1+\lambda^k)^{n-k} - 1 = (n-k)\lambda^k + \Delta(k,n)
\end{equation}
Then \begin{equation}
    R_{\ge2,n} = \sum_{k\geq 2} \omega_k (n-k)\lambda^k + \sum_{k\geq 2} \omega_k\Delta(k,n)
\end{equation}
For the summation over linear parts, using that $(n-k)\binom{n}{k} = n \binom{n-1}{k}$, we get 
\begin{equation}
    \sum_{k=0} \omega_k (n-k)\lambda^k = n2^{-n} \sum_{k=0}^{n-1} \binom{n-1}{k} \lambda^k = n2^{-n}(2-p)^{n-1} = T_n
\end{equation}
Removing the terms for $k=0$ and $k=1$ gives
\begin{equation}
    \sum_{k\geq 2} \omega_k (n-k)\lambda^k = T_n + o(T_n)
\end{equation}
When $c>2$, $T_n = o(1)$, and since the summation over remainders is dominated by the summation over linear parts. Putting everything together gives
\begin{equation}
    \operatorname*{\mathbb{E}}_G\left[\sum_{\vta \in \mathbb{F}_2^n}  2^{-r^\vta}\right] = 2 + o(1)
\end{equation}
\end{document}